\theoremstyle{plain}
\newtheorem{theorem}{Theorem} 
\newtheorem{corollary}{Corollary}
\newcommand{\nwc}{\newcommand}
\nwc{\bit}{\begin{itemize}}
\nwc{\eit}{\end{itemize}}
\nwc{\Levy}{L\'evy}
\nwc{\LK}{L\'evy-Khintchine}
\nwc{\LI}{L\'evy-It\^{o}}
\nwc{\CH}{Cole-Hopf}
\nwc{\Holder}{H\"{o}lder}
\nwc{\Backlund}{B\"{a}cklund}
\nwc{\Volpert}{Vol'pert}
\nwc{\cadlag}{c\`{a}dl\`{a}g}
\nwc{\BVloc}{BV_{loc}}
\nwc{\be}{\begin{equation}}
\nwc{\ee}{\end{equation}}
\nwc{\ba}{\begin{eqnarray}}
\nwc{\ea}{\end{eqnarray}}
\nwc{\la}{\label}
\nwc{\nn}{\nonumber}
\nwc{\Z}{\mathbb{Z}}
\nwc{\C}{\mathbb{C}}
\nwc{\E}{\mathbb{E}}
\nwc{\R}{\mathbb{R}}
\nwc{\N}{\mathbb{N}}
\nwc{\Mn}{\mathbb{M}_N}
\nwc{\prob}{\mathbb{P}}
\nwc{\Skor}{\mathbb{D}}
\nwc{\attr}{\mathcal{A}}
\nwc{\PP}{\mathcal{P}}
\nwc{\PPE}{\mathcal{P}(E)}
\nwc{\M}{\mathcal{M}}
\nwc{\Tt}{T^{(t)}}
\nwc{\Ut}{U^{(t)}}
\nwc{\Vt}{V^{(t)}}
\nwc{\Lt}{L^{(t)}}
\nwc{\tlambda}{\tilde{\lambda}}
\nwc{\tbar}{\bar{t}}
\nwc{\gtx}{g^{(t)}_x}
\nwc{\order}{\prec}
\nwc{\law}{\stackrel{\mathcal{L}}{\rightarrow}}
\nwc{\eqd}{\stackrel{\mathcal{L}}{=}}
\nwc{\vp}{\varphi}
\nwc{\Vp}{\Phi}
\nwc{\psilevy}{\Psi}
\nwc{\ve}{\varepsilon}
\nwc{\veps}{\varepsilon}
\nwc{\eps}{\ve}
\nwc{\betarsc}{\theta}
\nwc{\cl}{c\'{a}dl\'{a}g}
\nwc{\qref}[1]{(\ref{#1})}
\nwc{\D}{\partial}
\nwc{\Ebar}{\bar{E}}
\nwc{\mmt}{m}
\nwc{\dnto}{\downarrow}
\nwc{\fzero}{F_\rho} 
\nwc{\fone}{M_\rho}
\nwc{\ip}[1]{\langle #1 \rangle}
\nwc{\ipbig}[1]{\left\langle #1 \right\rangle}
\nwc{\Lip}{\mathop{\rm Lip}\nolimits}
\nwc{\Tmin}{T_{\min}}
\nwc{\Tmax}{T_{\max}}
\nwc{\Tgel}{T_{\rm gel}}
\nwc{\LL}{\mathcal{L}}
\nwc{\mudot}{\mu}
\nwc{\nudot}{\nu}
\nwc{\rme}{{\rm e}}
\nwc{\rmi}{{\rm i}}
\nwc{\nsup}{^{(n)}}
\nwc{\ksup}{^{(k)}}
\nwc{\jsup}{^{(j)}}
\nwc{\nksup}{^{(n_k)}}
\nwc{\inv}{^{-1}}
\nwc{\qxfac}{(1-\rme^{-qx})}
\nwc{\one}{\mathbf{1}}
\nwc{\psib}{\psi^{(b)}}
\nwc{\pss}{\psi_\#}
\nwc{\phss}{\Phi_\#}
\nwc{\asharp}{\gamma}
\nwc{\vfour}{\hat{v}}
\nwc{\Pss}{\Psi_\#}
\nwc{\pibar}{\Pi}
\nwc{\Ltail}{\bar{\Lambda}}
\nwc{\vb}{v^{(b)}}
\nwc{\spec}{\tilde{\beta}}
\nwc{\hv}{\hat{v}}
\nwc{\hvl}{\hat{v}_L}
\nwc{\lap}{{\,\cal L}}
\nwc{\argmin}{\mathrm{arg}^+\mathrm{min}}
\nwc{\argmax}{\mathrm{arg}^+\mathrm{max}}
\nwc{\gena}{\mathcal{A}}
\nwc{\genb}{\mathcal{B}}
\nwc{\genc}{\mathcal{C}}
\nwc{\testfn}{\varphi}
\nwc{\jumpex}{n_*}
\nwc{\jump}{n}
\nwc{\Jbar}{j}
\nwc{\Kbar}{k}
\nwc{\Ai}{\mathrm{Ai}}
\nwc{\Painleve}{Painlev\'{e}}
\nwc{\pot}{\mathbf{P}}
\nwc{\drift}{b}
\nwc{\den}{F}
\nwc{\velu}{V_u}
\nwc{\velv}{V_v}
\nwc{\genadag}{\gena^\dagger}
\nwc{\genbdag}{\genb^\dagger}
\nwc{\up}{u_+}
\nwc{\um}{u_-}
\nwc{\umh}{u_{-h}}
\nwc{\psip}{\psi_+}
\nwc{\psimh}{\psi_{-h}}
\nwc{\onefn}{$1$-point function}
\nwc{\twofn}{$2$-point function}
\nwc{\driftab}{\nu}
\nwc{\collab}{\sigma}
\nwc{\QQ}{\mathcal{Q}}
\nwc{\My}{\mathcal{M}}
\nwc{\Mf}{\mathcal{N}}
\nwc{\Markov}{\mathfrak{M}}
\nwc{\markov}{\mathfrak{m}}
\nwc{\cone}{\mathfrak{q}}
\nwc{\Stoch}{\mathfrak{Q}}
\nwc{\ad}{\mathrm{ad}}
\nwc{\Ad}{\mathrm{Ad}}
\nwc{\adR}{\ad^{(R)}}
\nwc{\diag}{\mathrm{diag}}
\nwc{\dd}{\mathfrak{d}}
\nwc{\gln}{\mathfrak{g}}
\nwc{\unn}{\mathfrak{u}}
\nwc{\lotri}{\mathfrak{l}}
\nwc{\GLM}{GL(m,\C)}
\nwc{\proj}{P}
\nwc{\projm}{\proj_\markov}
\nwc{\projd}{\proj_\dd}
\nwc{\projmp}{\proj_{\markov^\perp}}
\nwc{\projdp}{\proj_{\dd^\perp}}
\nwc{\Tr}{\mathrm{Tr}}
\nwc{\Id}{\mathrm{Id}}
\nwc{\loopalg}{L(\gln)}
\nwc{\loopp}{\loopalg_+}
\nwc{\loopm}{\loopalg_-}
\nwc{\Loopgroup}{\mathcal{G}}
\nwc{\son}{\mathfrak{so}(N)}
\nwc{\schwartz}{\mathcal{S}}
\nwc{\id}{I}
\nwc{\Res}{\mathrm{Res}}
\nwc{\group}{g}
\nwc{\length}{L}
\nwc{\action}{S}
\renewcommand{\Re}{\mathop{\rm Re}\nolimits}
\theoremstyle{definition}
\newtheorem{defn}[theorem]{Definition} 
\newtheorem{remark}[theorem]{Remark}
\theoremstyle{remark}
\numberwithin{equation}{section}
\numberwithin{figure}{section}
\begin{document}

\title{Complete integrability of shock clustering and Burgers turbulence\\} 
\author{Govind Menon\textsuperscript{1}}

\date{\today}

\maketitle

\begin{abstract}
We consider scalar conservation laws with convex flux and random initial data. 
The Hopf-Lax formula induces a deterministic evolution of the law of the initial data. In a recent article, we derived a kinetic theory and Lax equations to describe the evolution of the law under the assumption that the initial data is a spectrally negative Markov process. Here we show that: (i) the Lax equations are Hamiltonian and describe a principle of least action on the Markov group that is in analogy with geodesic flow on $SO(N)$; (ii) the Lax equations are completely integrable and linearized via
a loop-group factorization of operators; (iii) the associated zero-curvature equations can be solved via inverse scattering. Our results are rigorous for $N$-dimensional approximations of the Lax equations, and yield formulas for the limit $N \to \infty$. The main observation is that the Lax equations are a $N \to \infty$ limit of a  Markovian variant of the $N$-wave model. This allows us to introduce a variety of methods from the theory of integrable systems.

\end{abstract}

\smallskip
\noindent
{\bf MSC classification:} 35R60, 37K10, 60J35, 60H99, 82C99, 35L67

\smallskip
\noindent
{\bf Keywords:} Shock clustering, stochastic coalescence, kinetic theory, integrable systems, Burgers turbulence, $N$-wave model.

\medskip
\noindent
\footnotetext[1]
{Division of Applied Mathematics, Box F, Brown University, Providence, RI 02912.
Email: menon@dam.brown.edu}

\section{Introduction}
\label{sec:intro}
\subsection{Turbulence and flows of probability measures}
\label{subsec:turb}
A fundamental problem in the statistical theory of turbulence is to construct random incompressible velocity fields that model isotropic homogeneous turbulence. Such random fields must be supported on weak solutions to the Euler equations that dissipate kinetic energy in accordance with the criterion of Kolmogorov and Onsager~\cite{CET,Eyink,Kolm,Onsager}. This problem is currently out of reach. Much of our understanding is based instead on vastly simplified models. One such model, proposed by Burgers, is to understand the statistics of the Cole-Hopf (or entropy) solution to Burgers equation
\be
\label{eq:burgers}
\partial_t u + \partial_x \left(\frac{u^2}{2}\right) =0, \quad x \in \R, t>0,
\ee
with random initial data $u_0$ such as white noise~\cite{Burgers,She}.  A closely related problem is to consider \qref{eq:burgers} with random forcing, and to understand the associated equilibrium measure. While we focus on the unforced equation in this article, some of our methods also apply to \qref{eq:burgers} with random forcing~\cite{MS1}. This class of problems is called Burgers-KPZ turbulence. An explanation of its role in statistical hydrodynamics may be found in~\cite{E-Sinai}. 

Random initial data also arise in applications unrelated to turbulence. For example, the coarsening of domains in the kinetics of phase transitions is modeled by solutions to the Allen-Cahn and Cahn-Hilliard equations that emerge from disorder~\cite{Voorhees}. In such problems, the equations of continuum physics induce an evolution of the law of the initial data. The first clear formulation of an evolution equation for the law of solutions seems to be due to Hopf~\cite{Hopf-stat}.  In this article, we show that this problem is suprisingly rich, even in the setting of perhaps the simplest nonlinear equations. 

We consider the  scalar conservation law 
\be
\label{eq:sc1}
\partial_t u + \partial_x f(u) =0, \quad x \in \R, t>0, \quad u(x,0)=u_0(x),
\ee
with a strictly convex, $C^1$ flux $f$. The Hopf-Lax formula defines a unique entropy solution to \qref{eq:sc1}.  When the initial data $u_0$ is random, the Hopf-Lax formula induces a deterministic evolution of the law of $u_0$.  Our main contribution is to show that if $u_0$ is a Markov process in $x$ with only downward jumps or a limit of such processes, then the {\em evolution of its law  is  completely integrable\/}. Our results include Burgers model, but the assumption $f(u)=u^2/2$ is not necessary.  Neither do we require a special choice of initial condition such as white noise or Brownian motion (though these yield important exact solutions). For all convex, $C^1$ $f$ and a broad class of random initial data the evolution of the law of $u(\cdot, t)$ is given by kinetic equations for shock clustering that are a continuum limit of a Markovian variant of the $N$-wave model.  The 
$N$-wave model generalizes the three-wave model of Manakov and Zakharov in nonlinear optics~\cite{Zakharov-Manakov} and is well-known to be completely integrable~\cite[p.55]{Ablowitz},~\cite[III.4]{Zakharov}. We show that 
this model also underlies Burgers turbulence and shock clustering for 
~\qref{eq:sc1} with arbitrary convex $f$. We stress that it is the evolution of the  {\em law\/} of $u(\cdot,t)$ that is integrable, and this is completely distinct from the integrability of \qref{eq:burgers}.

This is surprising enough, but more is true. Our kinetic equations sit at a rich juncture of  problems: (i) geodesic flows on Lie groups~\cite{Manakov}; (ii) the integrable systems of the 19th century, as recast by Moser~\cite{Moser}; (iii) the completely integrable systems of random matrix theory discovered by the Kyoto school~\cite{JMMS}; (iv) integrable hierarchies on groups~\cite{TU1}; and (v) asymptotic problems in representation theory~\cite{Kerov}. This reveals a close and unexpected relation between the theory of Markov processes, kinetic theory and integrable systems.

This confluence of ideas is quite bewildering, and a full explanation for their role in what should be a purely probabilistic problem still eludes us. Some heuristic explanation is perhaps the following: At its heart, complete integrability is an explicit understanding of the hidden symmetries of a Hamiltonian system. On the other hand, understanding continuous deformations of the law of a stochastic process is a basic problem in probability theory (e.g., the Girsanov theorem may be viewed in this light).  The deeper principles here seem to be  that: (i) the space of Feller processes with bounded variation on the Skorokhod space can be given a natural symplectic structure; (ii) for every convex, $C^1$ flux $f$ \qref{eq:sc1} induces a Hamiltonian flow on this space with respect to this symplectic structure; (iii) these flows  commute for distinct $f$. A precise formulation of these ideas is subtle, and we have been unable to develop this viewpoint completely, even though we obtain partial results. In order to state precisely what we prove, and what is mere conjecture, we first review some recent work.

\subsection{Lax equations for shock clustering with Markov data}
We always assume that random initial data $u_0$ for \qref{eq:sc1} is a Markov process in $x$ with only downward jumps (a {\em spectrally negative Markov process}) or a limit of such processes. This assumption is motivated by two considerations. First, in order to obtain a detailed understanding of the evolution of the law of $u_0$ under \qref{eq:sc1}, it is necessary to work with a class of well-understood random processes on the line and it is natural to choose Markov processes. Second, it is a surprising fact that for every $t>0$ the solution to Burgers equation with white noise is a stationary, spectrally negative Markov process in $x$. The Markov property of this solution  was assumed by Burgers, and first proved (in a completely different context) by Groeneboom~\cite{Groeneboom}. The importance of the Markov property in the context of Burgers turbulence was first noted by Avallaneda and E~\cite{AE}. 

In a recent article, we proved the  following {\em closure theorem} for the entropy solution to \qref{eq:sc1}:  Assume the initial data $u_0(x)$ is a  spectrally negative strong Markov process in $x$. Then for every $t>0$ the entropy solution to \qref{eq:sc1} remains a spectrally negative Markov process in $x$~\cite[Thms.2,3]{MS1}. This shows that the entropy solution to \qref{eq:sc1} leaves this class of stochastic processes invariant. It is not necessary to assume that $f(u)=u^2/2$, only that $f$ is convex and $C^1$. There is also no need to assume that $u_0$ is stationary in $x$.  Under an additional assumption of regularity (preservation of the Feller property), the closure theorem forms the basis for a kinetic theory of shock clustering as follows. 

Feller processes are characterized by their generators. The simplest such characterization is the \LK\/ formula for \Levy\/ processes. A general  characterization, attributed to Courr\'{e}ge in~\cite[Thm 3.5.3]{Applebaum} builds on the \LK\/ formula. Assume $u(x)$, $x \in \R$ is a stationary, Feller process. Then its generator $\gena$ is an integro-differential operator that acts on $C_c^\infty$ test functions in its domain as follows:
\ba
\label{eq:couregge}
\lefteqn{\gena \testfn(u) = a(u)\testfn''(u) + \drift(u)\testfn'(u) + c(u)\testfn(u)}
\\
&&
\nn
 + \int_{\R \backslash\{u\}} \left( \testfn(v)-\testfn(u) - \psi(u,v) \testfn'(u)\right) \jump(u,dv).
\ea
Here $a$,$\drift$,$c$ are functions on the line satisfying certain continuity criterion, $n(u,dv)$ is a \Levy\/ measure that describes the jumps of the process, and $\psi(u,v)$ is a local unit (in the simplest situation, we have $\psi(u,v)=v-u$). The precise assumptions are stated in~\cite[Thm. 3.5.3]{Applebaum}. There is an intimate relation between this characterization and the sample paths of the Feller process: $a$ describes the diffusion of the Feller process, so that $a \geq 0$; $c$ describes killing, so that $c \leq 0$; $\drift$ describes the drift, and its sign is not restricted.

Our basic idea, following~\cite{CD}, is to study the shock statistics through an evolution equation for the generator. The general formula \qref{eq:couregge} simplifies because for any $t>0$, the entropy solution to \qref{eq:sc1} has bounded variation, only downward jumps, and no killing. Thus $a$ and $c$ vanish, and the support of $n(u,dv)$ is $(-\infty,u)$. For fixed $t>0$, if $u(x,t)$, $x \in \R$ is a stationary  Feller process, its generator $\gena(t)$ is an integro-differential operator of the form
\be
\label{eq:gena1}
\gena(t)\testfn(u) = \drift(u,t) \testfn'(u) + \int_{-\infty}^u \left( \testfn(v)-\testfn(u)\right) \jump(u,dv,t). 
\ee
One of the main results in ~\cite{MS1} is that $\gena(t)$ satisfies the Lax equation
\be
\label{eq:kinetic}
\partial_t \gena = [\gena, \genb].
\ee
Here $[\gena,\genb] =\gena\genb-\genb \gena$ denotes the Lie bracket, and 
the operator $\genb$ is defined by its action on test functions as follows:
\be
\label{eq:genb}
\genb \testfn (u) = -f'(u)\drift(u,t) \testfn'(u) -\int_{-\infty}^u [f]_{u,v}\left( \testfn(v)-\testfn(u)\right) \jump(u,dv,t).
\ee
$[f]_{u,v}$ is abbreviated notation for the Rankine-Hugoniot speed of a shock connecting states $u$ and $v$.
\be
\label{eq:RH}
[f]_{u,v} := \frac{f(v)-f(u)}{v-u}.
\ee
We do not need to assume that the process $u(\cdot, t)$ is stationary in $x$. Non-stationary data arise when we wish to model the spread of `turbulent bursts' (i.e. localized data $A_0(x)$) or  Riemann data (e.g one-sided initial conditions as in ~\cite{Valageas}). These situations are described by the zero-curvature equation
\be
\label{eq:zc}
\partial_t \gena -\partial_x \genb = [\gena, \genb].
\ee

It requires considerable insight to realize that this approach is fruitful, and our work was greatly inspired by Duchon and co-workers~\cite{Duchon0,Duchon1,CD}. Several open questions remain: In particular, it is still necessary to justify the connection between \qref{eq:sc1} with random data and \qref{eq:zc} in full generality~\cite[\S 1.6]{MS1}. Our earlier work only showed that the entropy solution preserves the strong Markov property, whereas we really need to show that it preserves the Feller property. This requires a well-posedness theory and uniform estimates for \qref{eq:zc}. One of our goals in this article is to lay a foundation for the analysis of \qref{eq:zc} that also allows us to address this question.

\subsection{Kinetic equations}
To convince the reader of the merit of this approach, let us briefly explain how it describes the evolution of shock statistics.  Since the generator $\gena$ is characterized by the drift, $\drift$, and jump measure,  $\jump$, the Lax equation can be expanded using \qref{eq:gena1} and \qref{eq:genb} to yield evolution equations for $\drift$ and $\jump$. The drift satisfies the simple differential equation
\be
\label{eq:evolb}
\partial_t \drift(u,t) = -f''(u)\drift^2(u,t).
\ee
The evolution of $\jump$ is more interesting, since the shock statistics evolve by decay of rarefaction waves and growth in binary collisions. To state the evolution equations, we assume for simplicity that the jump measure has a density, say $\jump(u,dv,t)= \jump(u,v,t) \, dv$. Then $\jump$ satisfies a kinetic equation of Vlasov-Boltzmann type
\ba
\label{eq:kin26}
\lefteqn{ \partial_t \jump(u,v,t) + \partial_u\left(\jump \velu (u,v,t)\right)+ \partial_v \left( \jump\velv(u,v,t) \right)} \\
\nonumber 
&&
= Q(\jump,\jump) + \jump \left( \left([f]_{u,v}-f'(u)\right)\partial_u \drift - \drift f''(u)\right).
\ea
Here the drift velocities $\velu$ and $\velv$ are defined by
\be
\label{eq:kin9}
\velu(u,v,t) = \left( [f]_{u,v} -f'(u)\right) \drift(u,t), \quad \velv(u,v,t) = \left([f]_{u,v}-f'(v)\right) \drift(v,t),
\ee
and the collision kernel $Q$ counts growth and loss in binary clustering
\ba
\nonumber
\lefteqn{ Q(\jump,\jump)(u,v,t) = \int_v^u \left([f]_{u,w}-[f]_{w,v}\right)\jump(u,w,t)\jump(w,v,t)\, dw}\\
\nonumber
&& - \int_{-\infty}^v \left([f]_{u,v}-[f]_{v,w}\right) \jump(u,v,t)\jump(v,w,t)\, dw \\
\label{eq:kin24}
 && -\int_{-\infty}^u \left([f]_{u,w}-[f]_{u,v}\right)\jump(u,v,t)\jump(u,w,t)\, dw. 
\ea
The distinction with earlier work on kinetics of shock clustering~\cite{EvE,Kida} is summarized in~\cite{MS1}.

\subsection{Exact solutions}
Despite their formidable appearance, these kinetic equations admit surprising exact solutions for Burgers equation. The first class of solutions correspond to spectrally negative \Levy\/ processes (e.g. when $u_0$ is a Brownian motion as in~\cite{Sinai}).  We then have  
$\gena e^{qy}= \psi(q,t) e^{qy}$ and $[\gena,\genb]e^{qy} = -\psi \partial_q \psi(q,t)$, where $\psi(q,t)$ denotes the Laplace exponent of the process $u(x,t)-u(0,t)$, $x \geq 0$, for fixed $t>0$. Then the Lax equation \qref{eq:kinetic} yields
\be
\label{eq:duchon}
\partial_t \psi + \psi \partial_q \psi =0, \quad t >0, q >0.
\ee
The beautiful fact that the Laplace exponent itself evolves by Burgers equation was discovered by Carraro and Duchon~\cite{Duchon0}, and made rigorous by Bertoin~\cite{B_burgers}. It is reminescent of the inverse scattering method in integrable systems.  In addition,~\qref{eq:kin26} reduces to Smoluchowski's coagulation equations with additive kernel. It has been known for some time that Smoluchowski's equation can be solved explicitly by the Laplace transform~\cite{Golovin}. But this solution takes on new meaning when we recognize that it describes exactly the clustering of shocks in Burgers equations with \Levy\/ process data. This connection is the basis for several deeper results connecting stochastic coalescence and Burgers turbulence~\cite{B_icm}.

Another remarkable solution to \qref{eq:kinetic} corresponds to the shock statistics in Burgers equation with white noise initial data~\cite{Frachebourg,Groeneboom}. This corresponds to a self-similar solution to \qref{eq:evolb} and \qref{eq:kin26} of the form 
\be
\label{eq:burg_wn}
b(u,t) = \frac{1}{t}, \quad \jump(u,v,t) = \frac{1}{t^{1/3}}\jumpex(ut^{1/3},vt^{1/3}).
\ee
The jump density $\jump_*$ of the integral operator is given explicitly as follows:
\be
\label{eq:groen2}
\jumpex(u,v) = \frac{J(v)}{J(u)}K(u-v),  \quad u >v, 
\ee
and vanishes if $u \geq v$. Here $J$ and $K$ are positive functions defined on the line and positive half-line respectively, whose Laplace transforms
\be
\label{eq:groen3}
\Jbar(q) =\int_{-\infty}^\infty e^{-qy} J(y)\,dy, \quad \Kbar(q) =\int_0^\infty e^{-qy} K(y)\, dy,
\ee
are meromorphic functions on $\C$ given by 
\be
\label{eq:groen4}
\Jbar(q) =\frac{1}{\Ai(q)}, \quad \Kbar(q)= -2 \frac{d^2}{dq^2}\log \Ai(q).
\ee
$\Ai$ denotes the Airy function~\cite[10.4]{AS}. This generator was computed by Groeneboom (but not as a solution to \qref{eq:kinetic}!)~\cite{Groeneboom}. When written this way, the formula for $\Kbar$ is reminescent of determinantal formulas in soliton theory.

\subsection{Discrete Lax equations and the Markov $N$-wave model}
The most intriguing aspect of~\cite{MS1} is that the kinetic theory of shock clustering has many features reminiscent of completely integrable systems. These include the formulation as a Lax pair, explicit exact solutions as above, a Painlev\'{e} property for the self-similar solution, and links with random matrix theory (see~\cite[\S 1.5.1]{MS1}). Our main contribution here is to understand the origin of these coincidences. 

In order to show that \qref{eq:kinetic} is completely integrable, we must show that it defines a Hamiltonian system and construct infinitely many commuting integrals. In addition, what is required is an explicit solution  via inverse scattering or a Riemann-Hilbert problem. In order to address these questions, we first study  exact finite-dimensional discretizations of \qref{eq:kinetic} (this terminology is explained below). Here the definition of Hamiltonian structure and complete integrability are unambiguous, and it has to be shown that \qref{eq:kinetic} actually has this structure. This lays a foundation for the analysis of \qref{eq:kinetic} and \qref{eq:zc}.

We discretize the system as follows. Fix a positive integer $N$ and discrete velocities $-\infty < u_1 < u_2 < \ldots < u_N < \infty$. We consider a continuous $x$ Markov processes that takes the values $u_k$, $1 \leq k \leq N$. The sample paths of this Markov process are piecewise constant paths. Let $\Mn$ denote the space of $N\times N$ matrices.  The generator $A \in \Mn$ satisfies
\be
\label{eq:Markov1} A_{ij} \geq 0, i \neq j, \quad \sum_{j=1}^N A_{ij}=0.
\ee
More formally, $A$ corresponds to an operator $\gena$ with drift $b \equiv 0$ and a jump measure $\jump(u_j,\,dv) = \sum_{k \neq j}A_{jk}\delta_{u_k}(dz)$. The jump measure vanishes if $u$ is not one of the discrete states $u_j$. In order to define the discretization $B$ of $\genb$ we introduce the symmetric matrix
\be
\label{eq:Markov2} F_{ij} = -\frac{f(u_i)-f(u_j)}{u_i-u_j}, i \neq j, \quad F_{ii} = - f'(u_i).
\ee
Here $f$ is the flux in the conservation law \qref{eq:sc1}, so that $-F_{ij}$ is the Rankine-Hugoniot speed of the shock connecting states $u_i$ and $u_j$. We then set
\be
\label{eq:Markov3} 
B_{ij} =F_{ij} A_{ij}, \;i \neq j,\quad B_{ii} = -\sum_{j \neq i} B_{ij}.
\ee
The discrete zero-curvature equation  is
\be
\label{eq:discrete-zc}
\partial_t{A} -\partial_x B =[A,B],
\ee
and the discrete Lax equation is
\be
\label{eq:discrete}
\dot{A}=[A,B].
\ee

Let us briefly comment on why this discretization is natural.  Initial data to \qref{eq:discrete-zc} correspond to random initial data $u_0$ to \qref{eq:sc1}. In particular, initial data to \qref{eq:discrete-zc} that are lower triangular corresponds to $u_0$ that are spectrally negative Markov processes with a finite number of velocities. For such initial data, the Feller property is preserved in time and the evolution of statistics of random initial data by the conservation law \qref{eq:sc1} is described exactly by \qref{eq:discrete-zc}. Thus, solutions to \qref{eq:discrete-zc} that are lower triangular and generators of Markov processes correspond to exact solutions to \qref{eq:zc}. This is what we mean when we say that the discretization is exact. 


When we consider \qref{eq:discrete}, complete integrability immediately becomes plausible. Indeed, ordinary differential equations of the form \qref{eq:discrete} arise in basic examples in the theory of integrable systems: these include geodesic flows on $SO(N)$~\cite{Manakov}, the integrable flows of Neumann and Jacobi~\cite{Moser}, and the integrable flows of random matrix theory~\cite{JMMS}. Moreover, the zero-curvature equations \qref{eq:discrete-zc} are very similar to the $N$-wave model~\cite{Ablowitz,Zakharov} with an important difference. For the $N$-wave model we typically assume $A \in \unn$, the algebra of the unitary group.
(Perhaps the only study  where $A$ is {\em not\/} assumed to lie in $\unn$ is~\cite{Ab-Hab}).  But $A \in \unn$ is incompatible with the Markov property.  Motivated by these considerations, we call \qref{eq:discrete-zc} the Markov $N$-wave model, or $mN$-wave model for short.

\subsection{Statement of results}
A complete study of \qref{eq:kinetic} and \qref{eq:zc} requires a combination of methods from the theory of Markov processes, integrable systems and spectral theory.  Our approach is to prove complete results in the finite-dimensional setting in this article, and to study the limit $N \to \infty$ in a sequel. Once one has  recognized the structure of the problem in the discrete setting, the proofs only rely on well-established techniques from integrable systems. (Of course, the main difficulty in our work was to recognize this structure!).  Our main results are:
\begin{enumerate}
\item  \qref{eq:discrete} is Hamiltonian  and is associated to a principle of least action (see Section~\ref{sec:hamilton} and Theorem~\ref{thm:hamilton}). 
\item  \qref{eq:discrete} is completely integrable  and linearized by a loop-group factorization (see Section~\ref{sec:aks} and Theorem~\ref{thm:ci}).
\item  an inverse scattering theory  and well-posedness for \qref{eq:discrete-zc} (see Section~\ref{sec:scatter}, in particular Theorem~\ref{thm:positivity}).
\end{enumerate}
Though obviously related, the method of solution for \qref{eq:kinetic} and \qref{eq:zc} turn out to be quite distinct. The Lax equation~\qref{eq:kinetic} can be studied through an operator factorization problem, while the zero-curvature equation \qref{eq:zc} can be attacked by inverse scattering.

Let us briefly explain how these results are proved. Hamiltonian structures and complete integrability are deeply linked to group actions. In seeking an algebraic, but probabilistically natural, approach to \qref{eq:kinetic}, we find the following simple structure. We observed in~\cite[\S 2.7]{MS1} that the space  $\markov_\infty$ of integro-differential operators $\genc$ of the form
\be
\label{eq:m_infinity}
\genc \testfn(u) = \beta (u) \testfn'(u) + \int_{\R} \left( \testfn(v)-\testfn(u)\right) \nu(u,v) \, dv, \quad \beta \in C_c^\infty(\R), \nu \in 
C_c^\infty(\R^2),
\ee
formally constitutes a Lie algebra. Of course, it is not clear that these operators generate a infinite-dimensional Lie group. But we may use this insight at the discrete level quite easily. The set of generators of Markov processes defined by \qref{eq:Markov1} forms a cone $\cone_N \subset \Mn$. Each element of $\cone_N$ generates a one-parameter Markov semigroup. Let us define 
\be
\label{eq:semigroup}
\Stoch_N = \{g \in GL(N, \R)| \quad g = e^{xA}, \; A \in \cone_N, \;x \geq 0 \}.
\ee
$\Stoch_N$ can be naturally embedded in a Lie group as follows. Define the {\em Markov algebra\/}
\be
\label{eq:m_algebra}
\markov_N= \{ A \in \Mn | \quad \sum_{j=1}^N A_{ij}=0, \quad 1 \leq i \leq N\}. 
\ee
$\markov_N$ is a Lie algebra. It generates the {\em Markov group\/}
\be
\label{eq:m_group}
\Markov_N= \{ g \in GL(N,\R) | \quad g = e^{xA}, \quad A \in \markov_N, \; x \in \R. \},
\ee
Since $N$ will be fixed in our results, we will mostly suppress the subscript $N$ in what follows. Clearly $\cone \subset \markov$, and $\Stoch \subset \Markov$. Some basic properties of $\markov$ and $\Markov$ may be found in~\cite{Johnson}.

The first result is that \qref{eq:discrete} is a Hamiltonian flow on $\markov$. In addition, when $f$ is convex and $A$ lower triangular, this Hamiltonian flow leaves $\cone$ invariant. These results are seen as follows.
Co-adjoint orbits of Lie groups carry a natural (Kirillov-Kostant) symplectic structure. In addition, Lie algebras that admit direct sum decompositions into subalgebras carry more than one symplectic structure. This idea has been formalized by the notion of an $r$-matrix~\cite{Semenov}. The Lie algebra $\gln = gl(N,\R)=\Mn$ admits a natural splitting induced by $\markov$ (see~\qref{eq:decomp1} below). We show that \qref{eq:discrete} is a Hamiltonian system on the algebra $\gln$ with a Lie-Poisson bracket induced by this splitting. This is stated precisely in Theorem~\ref{thm:hamilton} below.  Once we have established the Hamiltonian structure of \qref{eq:discrete} we treat the probabilistically important case ($A \in \cone$) in Theorem~\ref{thm:convex}. The proof shows clearly the role of convexity of $f$ and spectral negativity at the level of the Lax equations. 

We then formulate an associated principle of least action for \qref{eq:discrete}. This is in precise analogy with geodesic flow on $SO(N)$ and in particular, with Euler's equation for a free rigid body. When $F_{ij}$ is positive, the role of the flux $f$ is to define a (degenerate) metric on $\Markov$. More generally, $f$ defines a quadratic action through the multiplier $F_{ij}$. This principle of least action describes the evolution of a probability measure on path space through \qref{eq:discrete}, and is completely distinct from the usual principle of least action for the Hopf-Lax solution to \qref{eq:sc1}. 

Complete integrability is based on an elegant observation of Manakov~\cite{Manakov}.  Define the diagonal matrices
\be
\label{eq:diagonals}
\My = \diag(u_1, \ldots, u_N), \quad \Mf=\diag(f(u_1), \ldots, f(u_N)),
\ee
and  observe that $A$ and $B$ are related through the algebraic relation
\be
\label{eq:manakov}
[A, \Mf] -[\My,B]=0.
\ee
This allows us to introduce a spectral parameter $z \in \C$ in  \qref{eq:discrete} and embed the flow in a loop-algebra 
\be
\label{eq:manakov2}
\frac{d}{dt}(A-z\My) = [A-z\My, B+ z\Mf], \quad z \in \C.
\ee
Thus, the spectral curve $\{(z,\lambda) \in \C^2 | \det(A - z \My - \lambda \id)=0 \}$ is invariant, and its coefficients are integrals. Rather than verify explicitly that these integrals are in involution as in~\cite{Manakov}, it is simpler to apply the Adler-Kostant-Symes (AKS) theorem to show that \qref{eq:manakov2} is completely integrable. Again we need to find a suitable $r$-matrix, now of the loop algebra.
This is stated precisely in Theorem~\ref{thm:ci}.  The matrix factorization in the AKS theorem also yields a Riemann-Hilbert problem. Once this is formulated for finite $N$, it also yields a limiting factorization problem for the Lax equation \qref{eq:kinetic}.

Scattering and inverse scattering for a class of integrable systems including the $N$-wave model on $\unn$ was established by Beals and Coifman~\cite{BC1,BC2} and formulated in a general Lie algebraic setting by Beals and Sattinger~\cite{BS1}. 
This theory does not directly apply to \qref{eq:discrete-zc} since $\markov$ is not semi-simple. We modify~\cite{BC1,BC2} to obtain a scattering and inverse scattering theory for \qref{eq:discrete-zc}. Among other results, we obtain a hierarchy of  integrable flows and global well-posedness theorems for \qref{eq:discrete-zc} including the probabilistically natural case. These results are contained in Section~\ref{sec:scatter}. In addition to these rigorous results, the method also yields a formal solution procedure for \qref{eq:zc} via inverse scattering.

In the limit $N \to \infty$, $\gena$ is an integro-differential operator of the form \qref{eq:gena1}.  Subtle problems arise in the approximation: under suitable assumptions, an operator of the form \qref{eq:couregge} can be approximated by operators in $\markov_N$. However, operators of the general form \qref{eq:couregge} are not closed under the commutator! In addition, while operators of the form \qref{eq:m_infinity} form a Lie algebra, it is not clear that they correspond to an infinite-dimensional Lie group. Nevertheless, our work yields a  formal understanding of \qref{eq:kinetic} and \qref{eq:zc}, and thus \qref{eq:sc1} with spectrally negative Markov data. Theorem~\ref{thm:hamilton} suggests that \qref{eq:kinetic} is Hamiltonian with an associated principle of least action on the semigroup of Markov operators. Theorem~\ref{thm:ci} suggests that \qref{eq:kinetic} is  completely integrable, and yields an operator factorization problem for \qref{eq:kinetic}. Finally, the inverse scattering problem for \qref{eq:zc} extends formally to unbounded operators with little change. We hope that these natural conjectures will stimulate rigorous results of full generality.

\section{Hamiltonian structure}
\label{sec:hamilton}
In this section, we show that the Lax equation \qref{eq:discrete} defines a Hamiltonian system. As is well-known,  co-adjoint orbits of a Lie group carry a natural symplectic structure~\cite{AdvM-book,Arnold-Khesin}.  If $\gln$ denotes a finite-dimensional Lie algebra, and $H$ a smooth Hamiltonian defined on the dual space $\gln^*$, then Hamilton's equations may be rewritten as Kirillov's equation
\be
\label{eq:kirillov}
\dot{\alpha} = \ad^*_{dH(\alpha)}(\alpha), \quad \alpha \in \gln^*.
\ee
Our main observation is that the Lax equation \qref{eq:discrete} takes this form on the algebra $\gln$ with a probabilistically natural bracket defined in \qref{eq:twisted} below. 

\subsection{Algebraic preliminaries}
\label{sec:algebra}
Let $\gln =gl(N,\R)$ denote the Lie algebra of real, $N\times N$ matrices equipped with the bracket $[A,B]=AB-BA$. We have already defined the subspace $\markov \subset \gln$ in \qref{eq:m_algebra}. It is easily checked that $\markov$ is a subalgebra of $\gln$. Let $\dd$ denote the subspace of diagonal matrices. Since diagonal matrices commute, $\dd$ is trivially a subalgebra. Its importance here lies in the fact that $\gln$ admits a direct sum (vector space) decomposition 
\be
\label{eq:decomp1}
\gln = \markov \oplus \dd.
\ee
The decomposition \qref{eq:decomp1} is obtained as follows. Recall that $e = (1, 1, \ldots, 1)^T$. Given $A \in \gln$, we define the projections
\be
\label{eq:decomp2}
\projm A = A - \diag(Ae), \qquad \projd A = \diag(Ae).
\ee
Then $\projm^2 =\projm$, $\projd^2=\projd$, $\projm A \in \markov$ and $\projd A \in \dd$, and every matrix $A \in \gln$ may be written as
\be
\label{eq:decomp3}
A = \projm A + \projd A. 
\ee
Associated to a splitting such as \qref{eq:decomp1} is an $r$-matrix~\cite{Semenov}. 
This allows us to introduce a new Lie bracket on $\gln$. If $A,B \in \gln$, then we define a new $\ad$-action
\be
\label{eq:twisted}
\ad_B^{r} A= [B,A]_r= [\projm B,\projm A] - [\projd B, \projd A] = [\projm B, \projm A].
\ee
The last equality holds because diagonal matrices commute. $\gln$ remains a Lie algebra with the new bracket $[\cdot,\cdot]_r$.

We identify $\gln$ with its dual space $\gln^*$ through the non-degenerate $\Ad$-invariant pairing
\be
\label{eq:ad-pair}
(\alpha,A) =\Tr(\alpha A), \quad \alpha \in \gln^*, A \in \gln.
\ee 
The dual spaces $\markov^*$ and $\dd^*$ are naturally identified with the orthogonal complements $\dd^\perp$ and $\markov^\perp$ under $(\cdot,\cdot)$. It is easy to compute
\be
\label{eq:complements}
\markov^* \cong \dd^\perp = \{ \alpha | \, \diag(\alpha) =0 \}.
\ee
Since $Ae=0$ for every $A \in \markov$, and the dimension of $\markov$ is $N^2-N$ we also find
\be
\label{eq:complements2}
\dd^* \cong \markov^\perp = \{ \alpha | \, \alpha = \sum_{j=1}^N c_j E_j=0  \},
\ee
where $c_j \in \R$, $1 \leq j \leq N$ and $E_j$ is the matrix obtained from the zero matrix by replacing the $j$-th column with $e$. The projections $\projm$ and $\projd$ induce dual projections $\projmp$ and $\projdp$ in the natural manner: for $A \in \gln$ and $\alpha \in \gln^*$
\be
\label{eq:comp_proj}
(\projdp \alpha, A ) = (\alpha, \projm A), \quad  (\projmp \alpha, A ) = (\alpha, \projd A).
\ee

We now compute the $\ad^*$ action with the bracket  \qref{eq:twisted} and  the non-degenerate pairing \qref{eq:ad-pair}. For every
$A,B \in \gln$ and $\alpha \in \gln^*$ we have
\ba
\nonumber
\lefteqn{ \ad^{* r}_{B} \alpha (A)  = (\alpha, \ad^{r}_B A) =(\alpha, [\projm B, \projm A]) }
\\
\nn
&&
=  \Tr \left( \alpha [\projm B, \projm A] \right) = \Tr \left( [\alpha, \projm B] \projm A \right)
\\
\nonumber
&& 
 = \left( [\alpha,\projm B], \projm A \right) =
( \projdp  [\alpha, \projm B], A).
\ea
Since this holds for every $A \in \gln$,  we find
\be
\label{eq:baby_kir}
\ad^{*r}_{B} \alpha = \projdp [ \alpha, \projm B].
\ee

\subsection{Quadratic Hamiltonians and Kirillov's equation}
The calculation so far has been purely algebraic and we have been careful to distinguish $\alpha \in \gln^*$ from  $A \in \gln$. Since we have now computed the $\ad^{*r}$ action and $\gln$ has been identified with $\gln^*$ via $(,)$ we may drop this notation. We assume $H: \gln \to \R$ is $C^1$, replace $\alpha$ by $A$ and $B$ by $dH(A)$ in \qref{eq:baby_kir} to obtain Kirillov's equation with the bracket \qref{eq:twisted}
\be
\label{eq:kir_proj}
\dot{A} = \ad_{dH(A)}^{*r} A = \projdp [ A, \projm dH(A)]
\ee
These calculations have made precise the Hamiltonian structure. We now show that the Lax equation~\qref{eq:discrete} is Hamiltonian with this symplectic structure. 

\begin{theorem}
\label{thm:hamilton}
Assume given a real symmetric matrix $F$, and let $F \circ A$ denote the Hadamard product $(F \circ A)_{ij} = F_{ij} A_{ij}$. Define the quadratic Hamiltonian $H: \gln \to \R$ 
\be
\label{eq:quad_hamilton}
H(A) = \frac{1}{2} \Tr (A F\circ A) =\frac{1}{2} \sum_{i,j=1}^N F_{ij}A_{ij}A_{ji}.
\ee
\noindent
(a) The associated Hamiltonian vector field on $\gln$ with the bracket \qref{eq:twisted} is  
\be
\label{eq:kir_quad}
\dot{A}=[A, \projm F \circ A].
\ee
\noindent
(b) $\markov^\perp$, $\dd^\perp$, $\markov$, and $\dd$  are invariant under \qref{eq:kir_quad}.

\noindent
(c) When $F$ is defined by \qref{eq:Markov2}, the vector field \qref{eq:kir_quad} is identical to \qref{eq:discrete}.
\end{theorem}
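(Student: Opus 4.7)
The plan is to reduce everything to Kirillov's formula \qref{eq:kir_proj} and exploit the symmetry of $F$ together with the row-sum characterization of $\markov$. For part (a), I would first compute the gradient of $H$ with respect to the pairing $(\alpha,A)=\Tr(\alpha A)$. Varying $A \mapsto A+\delta A$ in \qref{eq:quad_hamilton} and using $F_{ij}=F_{ji}$ to symmetrize the two cross terms, the variation collapses to $\delta H = \sum_{ij} F_{ij}A_{ij}\,\delta A_{ji} = \Tr\bigl((F\circ A)\,\delta A\bigr)$, so that $dH(A) = F\circ A$. Substituting into \qref{eq:kir_proj} gives $\dot A = \projdp[A,\projm(F\circ A)]$, and to match \qref{eq:kir_quad} I would show that the commutator already lies in $\dd^\perp$. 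Writing $M := \projm(F\circ A)$, we have $M_{ij}=F_{ij}A_{ij}$ for $i\neq j$ and $M_{ii}=-\sum_{k\neq i}F_{ik}A_{ik}$, so
\[
[A,M]_{ii} \;=\; \sum_{j\neq i}\bigl(A_{ij}M_{ji}-M_{ij}A_{ji}\bigr)\;=\;\sum_{j\neq i}A_{ij}A_{ji}\,(F_{ji}-F_{ij})\;=\;0,
\]
where the vanishing is the one place the symmetry of $F$ enters decisively. Since $\projdp$ fixes zero-diagonal matrices, part (a) follows.

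Part (b) then consists of four short checks in the same spirit. The identity above shows that the vector field \qref{eq:kir_quad} always has zero diagonal, so $\dd^\perp$ is invariant automatically. If $A\in\markov$, so $Ae=0$, then because $M\in\markov$ also satisfies $Me=0$ one gets $[A,M]e = A(Me)-M(Ae)=0$, giving invariance of $\markov$. If $A\in\dd$ is diagonal then $F\circ A$ is diagonal and $\projm$ annihilates every diagonal matrix, so the vector field vanishes identically and $\dd$ consists of fixed points. Finally, for $A = ec^T \in \markov^\perp$ (so columns of $A$ are constant with values $c_j$), one computes $[A,M]_{ij} = \sum_k c_k M_{kj} - c_j(Me)_i = \sum_k c_k M_{kj}$, which is independent of $i$, hence $[A,M]\in\markov^\perp$.

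Part (c) is a direct unpacking of definitions: for $F$ as in \qref{eq:Markov2}, $(\projm(F\circ A))_{ij} = F_{ij}A_{ij} = B_{ij}$ for $i\neq j$, while the row-sum-zero requirement built into $\projm$ forces the diagonal to equal $-\sum_{j\neq i}F_{ij}A_{ij} = -\sum_{j\neq i}B_{ij} = B_{ii}$; hence $\projm(F\circ A)=B$ and \qref{eq:kir_quad} coincides with \qref{eq:discrete}. I do not anticipate any serious obstacles here: the only observation that requires care is the vanishing-diagonal calculation in (a), which justifies dropping $\projdp$ from the Kirillov form and pinpoints precisely why the hypothesis $F = F^T$ is indispensable; everything else is mechanical once the splitting $\gln=\markov\oplus\dd$ and its induced $r$-bracket \qref{eq:twisted} are in place.
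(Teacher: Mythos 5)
Your proposal is correct and follows essentially the same route as the paper: compute $dH(A)=F\circ A$, use the symmetry of $F$ to show the commutator has vanishing diagonal so that $\projdp$ can be dropped from Kirillov's form \qref{eq:kir_proj}, and then verify the four invariances and the identification $\projm(F\circ A)=B$ by direct inspection. The only cosmetic difference is that the paper deduces the vanishing diagonal from $[A,dH(A)]$ and $[A,\projd dH(A)]$ separately, and observes that the vector field actually vanishes identically on $\markov^\perp$ rather than merely being tangent to it; neither changes the substance.
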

\begin{proof}
(a) Kirillov's equation on $\gln^*$ with the bracket \qref{eq:twisted} takes the form \qref{eq:kir_proj}. We only need to show that
\be
\label{eq:kir_proj2}
\projdp [ A, \projm dH(A)] = [A, \projm dH(A)] = [A, \projm F\circ A]. 
\ee
The identity \qref{eq:kir_proj2} is seen as follows. By \qref{eq:quad_hamilton}, $dH(A)=F\circ A$ and 
\be
[A, dH(A)]_{ij} = \sum_{k=1}^N\left(F_{kj} -F_{ik}\right) A_{ik}A_{kj}.
\ee
Since $F$ is symmetric, $[A, dH(A)]$ vanishes on the diagonal. In addition, since $\projd dH(A)$ is a diagonal matrix, $[A, \projd dH(A)]$ also vanishes on the diagonal. Thus, by \qref{eq:complements} 
\[ \projdp [ A, dH(A)] = [A, dH(A)], \quad \projdp [ A, \projd dH(A)] = [A, \projd dH(A)], \]
which implies \qref{eq:kir_proj2}.

(b) Every vector field of the form \qref{eq:kir_proj} vanishes on $\markov^\perp$. Thus, all such vector fields (not just quadratic Hamiltonians) leave $\dd^\perp$ and $\markov^\perp$ invariant.  If $A \in \dd$, then $\projm dH(A)$ vanishes. Thus, $\dd$ is invariant. Next, $\markov$ is a subalgebra. So if $A \in \markov$, then $[A, \projm dH(A)]\in \markov$ and $\markov$ is also invariant. 

\end{proof}

\begin{remark}
\label{rem:poisson}
The following subtlety should be noted in (b).  $\dd^\perp$ and $\markov^\perp$ are Poisson subspaces, but $\dd$ and $\markov$ are not. A subspace $V \subset \gln$ is a Poisson subspace if and only if the restriction of every Hamiltonian vector field on $\gln$ to $V$ is tangent to $V$~\cite[Prop. 3.33]{AdvM-book}.  However, it is only vector fields corresponding to the quadratic Hamiltonians \qref{eq:quad_hamilton} that vanish on $\dd$ and $\markov$.
\end{remark}
\subsection{Convexity of $f$ and spectral negativity}
We defined the Markov group in order to give the Lax equations \qref{eq:discrete} the Hamiltonian structure stated in Theorem~\ref{thm:hamilton}.  We now return to the probabilistically interesting case.  We consider the initial value problem for \qref{eq:discrete} with initial data that are generators of spectrally negative Markov processes with discrete states $-\infty < u_1 < \ldots < u_N < \infty$. In this discrete setting, spectral negativity simply means that the generator is a lower-triangular matrix. For brevity, let $\lotri$ denote the algebra of lower-triangular matrices. We then have
\begin{theorem}
\label{thm:convex}
Assume $f$ is a $C^1$ convex flux and $F$ is defined by \qref{eq:Markov2}. Then
\begin{enumerate}
\item[(a)] The vector field \qref{eq:discrete} leaves $\lotri$ invariant.	
	
\item[(b)] The vector field \qref{eq:discrete} leaves $\cone \cap \lotri$ positively invariant.
\item[(c)] For every $A_0 \in \cone \cap \lotri$, there is unique, global, $C^\infty$ solution $A: [0,\infty)$ with initial condition $A(0)=A_0$.
\end{enumerate}
\end{theorem}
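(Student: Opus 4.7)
The plan is to treat the three parts separately, with (a) essentially algebraic, (b) a tangent-cone (Nagumo) argument that pinpoints the role of convexity, and (c) an \emph{a priori} bound via spectrum preservation.

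For (a), first observe that if $A \in \lotri$ then $B \in \lotri$ as well: the off-diagonal entries $B_{ij} = F_{ij} A_{ij}$ vanish whenever $A_{ij}$ does, and $B_{ii}$ is diagonal. Since $\lotri$ is a Lie subalgebra of $\gln$, the commutator $[A,B]$ is lower triangular, so the vector field \qref{eq:discrete} is tangent to $\lotri$.

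For (b), observe first that $F$ defined by \qref{eq:Markov2} is symmetric, so Theorem~\ref{thm:hamilton}(b) applies and $\markov$ is invariant under \qref{eq:discrete}. Combined with (a), the affine subspace $\markov \cap \lotri$ is preserved; on it the cone condition reduces to $A_{ij} \geq 0$ for $i > j$, since the diagonal entries are then automatic via $A_{ii} = -\sum_{j<i} A_{ij} \leq 0$. By Nagumo's invariance theorem it suffices to check that if $A \in \cone \cap \lotri$ and $A_{ij} = 0$ for some fixed pair $i > j$, then $\dot{A}_{ij} \geq 0$. Expanding $\dot{A}_{ij} = \sum_k (A_{ik} B_{kj} - B_{ik} A_{kj})$ and using lower triangularity of $A$ and of $B$, the $k=i$ and $k=j$ contributions both vanish at $A_{ij}=0$, leaving
\be
\dot{A}_{ij} \;=\; \sum_{j < k < i} (F_{kj} - F_{ik})\, A_{ik}\, A_{kj}.
\ee
Since $A_{ik}, A_{kj} \geq 0$, it remains to show $F_{kj} - F_{ik} \geq 0$, i.e.\ $[f]_{u_j,u_k} \leq [f]_{u_k,u_i}$, for $u_j < u_k < u_i$; this is precisely the monotonicity of secant slopes of a convex function. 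Hence $\cone \cap \lotri$ is positively invariant.

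For (c), the right-hand side of \qref{eq:discrete} is polynomial in $A$, so there is a unique local $C^\infty$ solution on a maximal interval $[0, \Tmax)$. The Lax equation preserves the characteristic polynomial of $A$, and by (a) every $A(t)$ is lower triangular, so the multiset $\{A_{11}(t), \ldots, A_{NN}(t)\}$ equals $\{A_{11}(0), \ldots, A_{NN}(0)\}$ for all $t$; in particular $\max_i |A_{ii}(t)| \leq \max_i |A_{ii}(0)|$. Combined with the row-sum identity and the positivity $A_{ij}(t) \geq 0$ from (b), this yields the uniform bound $0 \leq A_{ij}(t) \leq \max_k |A_{kk}(0)|$ for all $i > j$ and $t \in [0, \Tmax)$. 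Such an \emph{a priori} bound precludes finite-time blow-up, so $\Tmax = \infty$.

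The main obstacle is the bookkeeping step in (b): one must verify that after the cancellations at $A_{ij}=0$, the surviving combination is exactly $(F_{kj}-F_{ik})$ restricted to strictly intermediate indices $j<k<i$. It is this specific combination that links $F$ back to the convexity of $f$, and hence identifies convexity as the precise hypothesis that forces $\cone \cap \lotri$ to be positively invariant.
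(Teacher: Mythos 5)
Your proposal is correct and follows essentially the same route as the paper: triangularity closure for (a), a boundary tangency check at $A_{ij}=0$ reducing to the monotonicity of secant slopes $F_{kj}-F_{ik}\geq 0$ for $j<k<i$ in (b), and an \emph{a priori} bound $0\leq A_{ij}\leq \max_k|A_{kk}(0)|$ from the cone and row-sum conditions in (c). The only (harmless) variation is in (c), where you control the diagonal via isospectrality of the Lax flow, whereas the paper observes more directly that $[A,B]_{ii}=0$ because $F$ is symmetric, so each diagonal entry is pointwise conserved.
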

\begin{proof}
(a) is easy. If $A \in\lotri$, then $B \in \lotri$ and $[A,B] \in \lotri$. (This does not require convexity of $f$).

\noindent
(b)  Since we have already shown that the lower-triangular form is preserved, we only need to show that the flow preserves positivity. In order to establish positive invariance, we show that the vector field \qref{eq:discrete} `points into' $\cone \cap \lotri$ for a point on its boundary. A point on the boundary has $A \in \cone \cap \lotri$ and $A_{ij}=0$ for some indices $(i,j)$ in the lower-triangular region (i.e $i >j$). We claim that $\dot{A}_{ij} \geq 0$ so that positivity is preserved. Indeed, by \qref{eq:discrete}
\be
\label{eq:pos1}
\dot{A}_{ij} = \sum_{k \neq ij} \left(F_{kj} -F_{ik}\right) A_{ik}A_{kj} + \left(A_{ii}-A_{jj}\right)B_{ij} + \left(B_{jj}-B_{ii}\right) A_{ij}.
\ee
The second and third term vanish since $A_{ij}=0$ and $B_{ij}=F_{ij}A_{ij}=0$. We now consider the first term. Since $A$ is lower-triangular, the sum extends only over $j < k < i$. For such $k$, $u_j < u_k < u_i$, and the convexity of $f$ and \qref{eq:Markov2} imply
\be
\label{eq:pos2}
F_{kj} -F_{ik} = \frac{f(u_i)-f(u_k)}{u_i-u_k} - \frac{f(u_k)-f(u_j)}{u_k-u_j}  \geq 0.
\ee
Finally, since $A \in \cone \cap \lotri$ we also have $A_{ik}\geq 0$ and $A_{kj} \geq 0$. Thus, $\dot{A}_{ij} \geq 0$.

\noindent
(c) The vector field \qref{eq:discrete} vanishes on the diagonal, so that the diagonal terms are conserved. The diagonal terms control the off-diagonal terms since $A \in \cone$. Thus, we have the uniform bound

\be 
\label{eq:pos3}
0 \leq A_{ij}(t) \leq |A_{ii}(t)| = |A_{ii}(0)|, \quad i >j, t \geq 0.
\ee

\end{proof}

\begin{remark}
As in Remark~\ref{rem:poisson}, $\lotri$ constitutes an invariant subspace, but not a Poisson subspace. In fact, if we set $F_{ii}=0$ (which does not affect the flow), we see that the Hamiltonian $H= \Tr(A F\circ A)/2$ vanishes on the subalgebra of lower triangular Markov matrices. 
\end{remark}

\subsection{A principle of least action}
\label{sec:geodesic}
In this section, we assume that $F_{ij}$ does not vanish. Equation~\qref{eq:Markov3} then defines an endomorphism of $\markov$, and we have the following principle of least action. Let $\group: [0,1] \to \Markov$ denote a $C^1$ path with given endpoints.  Assume $F_{ij} \neq 0$, define the endomorphism $I: \markov \to \markov$, $B \mapsto A$ given by \qref{eq:Markov3}, and define the action 
\be
\label{eq:geod6}
\action[\group] = \frac{1}{2}\int_0^1 \Tr\left(I\left(\group^{-1}\dot{\group}\right) \group^{-1}\dot{\group}\right) \, dt.
\ee
Then the Euler-Lagrange equations for minimizing this action are 
\be
\label{eq:geod5}
\dot{A}=[A,B], \quad \dot{\group} = \group B.
\ee
The second equation is a linear non-autonomous equation and can be integrated once we have solved the first, which is of course, \qref{eq:discrete}.

It is easy to check this assertion, but the main point to observe is that 
when $F_{ij}>0$, a similar endomorphism of $\son$ is used to define geodesic flow with respect to a left-invariant metric on $SO(N)$. Let us first recall these ideas. This will immediately explain the origin of \qref{eq:geod6}. 

We follow the notation of~\cite[Ch.8, p.265]{AdvM-book} (see also~\cite[Appendix 2]{Arnold}).  Assume $F$ is a symmetric matrix with strictly positive entries and let $I: \son \to \son$ denote the endomorphism  $\omega \mapsto I(\omega)$ with $I(\omega)_{ij} =\omega_{ij}/F_{ij}$. In physical terms, $\omega$ is the angular momentum in the body frame and $X= I(\omega)$ is the angular momentum in the body frame. The letter $I$ stands for the inertia tensor. Since $F_{ij}>0$ and $\son$ is semi-simple, the quadratic form
\be
\label{eq:geod1}
\left( \omega_1, \omega_2\right)_F:= \Tr\left( I(\omega_1) \omega_2 \right) , \quad \omega_1,\omega_2 \in \son
\ee
is an inner-product on $\son$. The inner-product $(\cdot, \cdot)_F$ then defines a left-invariant metric on $SO(N)$ by left-translation. The length of a $C^1$ path $\group: [0,1] \to SO(N)$ with respect to this metric is given by 
\be
\label{eq:geod2}
\length[\group] = \int_0^1 \sqrt{\left(\group^{-1} \dot{\group},\,\group^{-1}\dot{\group}\right)_F } \, dt. 
\ee
The problem of minimizing the length is the same as that of minimizing the action
\be
\label{eq:geod3}
\action[\group] = \frac{1}{2}\int_0^1\left(\group^{-1}\dot{\group},\, \group^{-1}\dot{\group}\right)_F \, dt.  
\ee
The Euler-Lagrange equations for this variational principle are precisely the Euler equations:
\be
\label{eq:geod4}
\dot{X} =[X, \omega], \quad \dot{\group} = \group \omega.
\ee
If we replace the group $SO(N)$ with the Markov group $\Markov$, the angular momentum $X$ with $A$, and the angular velocity $\omega$ with $B$, then we have a flow on $\Markov$ given by precisely \qref{eq:geod5}.
The analogy with geodesic flow on $\son$ is now clear.

In the calculations above, we did not need to assume that $F_{ij}$ is of the form \qref{eq:Markov2}. Indeed, every symmetric matrix $F$ with positive entries defines a diagonal left-invariant metric on $SO(N)$. Geodesic flow with this metric is Hamiltonian, but not necessarily completely integrable for $N \geq 4$. However, Manakov discovered that when
\be
\label{eq:geod8}
F_{ij} = \frac{f_i -f_j}{u_i -u_j}
\ee
for two vectors $(f_1,\ldots, f_N)$, and $(u_1, \ldots, u_N)$, the geodesic flow is integrable.  It is quite remarkable that we find ourselves in exactly this situation with the flux function $f$ in \qref{eq:sc1} defining $F$ as in \qref{eq:Markov2}. 

The analogy with geodesic flow is incomplete in the following respect. First, we do not assume that $F_{ij}>0$. Moreover, even when $F_{ij}>0$, the metric on $\markov$ is degenerate since $\markov$ is not semi-simple. (The 
Killing form of $A,B \in \gln$ is $ 2N\Tr(AB)-2 \Tr(A) \Tr(B)$. This vanishes on the identity in $\gln$, and on $\markov$ if  $A= \sum_{j=1}^N c_j E_j$ with $\sum_{j=1}^N  c_j=0$.)

\section{Complete integrability}
\label{sec:aks}
Complete integrability of all the systems alluded to in  Section~\ref{subsec:turb} can be established in the unified framework of~\cite{AdvM1,AdvM2,RSTS}. There are two distinct aspects to these studies: the first is to establish complete integrability via a suitable loop algebra splitting.  The second is to explicitly linearize the flow on a Jacobi variety. Here we only consider the first aspect of the problem. We show that the Lax equation \qref{eq:discrete} defines a completely integrable Hamiltonian system. The proof is almost a textbook application of 
the Adler-Kostant-Symes (AKS) theorem  and we follow the treatment in~\cite[\S 4.4]{AdvM-book}. Construction of the linearizing transformation is more difficult and will be considered in a separate article.

\subsection{Integrability via the AKS theorem}
We introduce the loop algebra of formal finite Laurent expansions valued in $\gln$
\be
\label{eq:loop1}
\loopalg = \{ X(z) = \sum_{m}^n A_k z^k , m,n \in \Z\;\; A_k \in \gln\}.
\ee
The natural Lie bracket on $\loopalg$ is given by
\be
\label{eq:loop_bracket}
\left[ \sum_{i \leq n} A_i z^i, \sum_{j \leq m} B_j z^j\right] = \sum_{k \leq m+n} z^k \left( \sum_{i+j=k} [A_i,B_j]\right).
\ee
The sum includes only a finite number of terms by \qref{eq:loop1}. We pair $\gln$ with $\gln^*$ via the non-degenerate, $\Ad$-invariant pairing \qref{eq:ad-pair}. There are then various $\Ad$-invariant pairings that one may introduce on $\loopalg$. We use the pairing
\be
\label{eq:loop_pairing}
\left< X | Y \right> = \sum_{i +j =0} (X_i, Y_j) = \frac{1}{2\pi i}\oint_{|z|=1} \Tr( X(z) Y(z)) \frac{dz}{z}.
\ee
It is easily checked that this pairing is non-degenerate and $\Ad$-invariant.

The direct sum decomposition \qref{eq:decomp1} also induces a decomposition of $\loopalg$. We define the subalgebras
\ba
\label{eq:loop2}
\loopp & =& \{ X(z) = \sum_{k \geq 0} A_k z^k , \quad A_0 \in \markov, \,\, A_k \in \gln, k \geq 1 \},\\
\label{eq:loop3}
\loopm & =& \{ X(z) = \sum_{k \leq 0} A_k z^k , \quad A_0 \in \dd, \,\, A_k \in \gln, k \leq -1 \}.
\ea
It is immediate from the calculations of Section~\ref{sec:algebra} that
\be
\label{eq:loop4}
\loopalg = \loopp \oplus \loopm.
\ee
The respective projections are given by
\be
\label{eq:loop5}
X(z)_+ = \projm A_0+ \sum_{k \geq 1} A_k z^k, \quad X(z)_-= \projd A_0 + \sum_{k \leq -1} A_k z^k.
\ee
The orthogonal complements with respect to the pairing \qref{eq:loop_pairing} are given by 
\ba
\label{eq:loop6}
\loopp^\perp & =& \{ Y(z) = \sum_{k \geq 0} Y_k z^k , \quad Y_0 \in \markov^\perp, \, Y_k \in \gln, k \geq 1\},\\
\label{eq:loop7}
\loopm^\perp & =& \{ Y(z) = \sum_{k \leq 0} Y_k z^k , \quad Y_0 \in \dd^\perp, \, Y_k \in \gln, k \leq -1 \}.
\ea
The gradient of a function $H: \loopalg \to \C$ is defined through the pairing \qref{eq:loop_pairing}. For $X,Y \in \loopalg$
\be
\label{eq:loop_lax}
\left< \nabla H(X) | Y \right> = \frac{d}{d\tau}H(X+\tau Y) |_{\tau=0}. 
\ee
Hamiltonian flows on $\loopalg$ correspond to the Lax equation 
\be
\dot{X}=[\nabla H(X), X]
\ee
If $H$ is $\Ad$-invariant then the vector-field \qref{eq:loop_lax} vanishes. On the other hand, $\Ad$-invariant Hamiltonians define non-trivial vector fields through the $r$-matrix induced by the splitting~\qref{eq:loop4}.  By the Adler-Kostant-Symes theorem, these vector fields correspond to the Lax equation
\be
\label{eq:loop_lax_R}
\dot{X}= \pm[X, \nabla H(X)_\mp].
\ee

We now show that \qref{eq:discrete} is of the form \qref{eq:loop_lax_R} for a suitable $\Ad$-invariant Hamiltonian on $\loopalg$. Let $f$ denote the flux in the scalar conservation law \qref{eq:sc1}, and consider its antiderivative
\be
\label{eq:def_ham1}
F(s) =  \int_0^s f(r) dr.
\ee
Define the Hamiltonian $H_F: \loopalg \to \C$
\be
\label{eq:def_ham2}
H_{F}(X(z)) = \left< F(X(z) z^{-1})| z^2 \right> =  \frac{1}{2\pi i}\oint_{|z|=1} F( X(z)z^{-1}) z \, dz.  
\ee
If $f$ is a polynomial, the second equality follows from Cauchy's theorem. The general case follows by approximation. The Hamiltonian $H_F$ is {\em distinct\/} from the Hamiltonian of Theorem~\ref{thm:hamilton}, and in some sense is more natural.
A few calculations (see ~\cite[p.94]{AdvM-book}) then yield that $H_F$ is $\Ad$-invariant and
\be
\label{eq:ham_loop1}
\nabla H_F(X) = f(Xz^{-1}) z.
\ee
Now consider the diagonal matrices $\My$ and $\Mf$ as in \qref{eq:diagonals}, and consider the finite-dimensional subspace $V$ of $\loopalg_+$ consisting of linear polynomials of the form
\be
\label{eq:ham_loop2}
V=\{ X \in \loopalg_+| X(z) = z\My - A\}.
\ee
A direct computation based on the definition of $H_F$ then yields
\be
\label{eq:ham_loop3}
\left(\nabla H_F(X)\right)_+ = z \Mf + B
\ee
where $B$ is as in \qref{eq:Markov3}. Thus, the Hamiltonian flow defined by $H_F$ on $V$ is exactly \qref{eq:manakov2}, which is of course, identical to \qref{eq:discrete}.

We now see that every $C^1$ flux $f$ gives rise to a Hamiltonian flow as in \qref{eq:manakov2}. Since each of these Hamiltonians is $\Ad$-invariant, they are all in involution. One may now count the number of integrals and invoke the Liouville theorem. Here we linearize the flow via the AKS theorem. 
\begin{theorem}
\label{thm:ci}
Let $A_0 \in \markov$ and $X_0(z)= z\My -A_0$. Let $g_\pm(t)$ denote the smooth curves in $\Loopgroup_\pm$ which solve the factorization problem
\be
\label{eq:factorization}
\exp(-t \nabla H_F(X_0)) = g_+(t)^{-1} g_-(t), \quad g_\pm(0)=\Id,
\ee
for $t$ in a maximal open interval $I$ containing $0$. Then the solution $X_t$, $t \in I$ to \qref{eq:manakov2} with initial condition $X_0$ is given by 
\be
\label{eq:factor_sol}
X_t = \Ad_{g_+(t)} X_0 = \Ad_{g_-(t)} X_0.
\ee
\end{theorem}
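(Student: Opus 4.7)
The plan is to apply the Adler-Kostant-Symes scheme to the loop-algebra splitting \qref{eq:loop4}, closely following \cite[\S 4.4]{AdvM-book}. Existence of smooth curves $g_\pm(t)$ on a maximal open interval $I \ni 0$ is handled first: at $t=0$ the right-hand side of \qref{eq:factorization} is the identity, factored trivially as $\Id^{-1}\Id$. The multiplication map $\Loopgroup_- \times \Loopgroup_+ \to \Loopgroup$, $(h_-, h_+) \mapsto h_-^{-1}h_+$, is a local diffeomorphism at $(\Id, \Id)$ because its differential is the sum map $\loopm \oplus \loopp \to \loopalg$, which is an isomorphism by \qref{eq:loop4}. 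The implicit function theorem therefore produces unique smooth curves $g_\pm(t) \in \Loopgroup_\pm$ satisfying \qref{eq:factorization} on some open interval containing $0$, and $I$ denotes the maximal such interval.

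Next I would establish the identity $\Ad_{g_+(t)} X_0 = \Ad_{g_-(t)} X_0$. Writing $\xi := \nabla H_F(X_0) = f(X_0 z^{-1})\,z$ from \qref{eq:ham_loop1}, the fact that $\xi$ is a formal power series in $X_0 z^{-1}$ multiplied by $z$ implies $[\xi, X_0] = 0$, and hence $\Ad_{\exp(-t\xi)} X_0 = X_0$. The factorization \qref{eq:factorization} gives $g_-(t) = g_+(t)\exp(-t\xi)$, so $\Ad_{g_-(t)} X_0 = \Ad_{g_+(t)}\Ad_{\exp(-t\xi)} X_0 = \Ad_{g_+(t)} X_0$, as required.

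The Lax equation then follows by differentiating \qref{eq:factorization}. A short manipulation yields
\be
\dot g_+ g_+^{-1} - \dot g_- g_-^{-1} = \Ad_{g_+(t)}\xi,
\ee
and $\Ad$-invariance of $H_F$ together with \qref{eq:loop_lax} gives $\Ad_{g_+(t)}\xi = \nabla H_F(X_t)$, where $X_t := \Ad_{g_+(t)} X_0$. Since $\dot g_+ g_+^{-1} \in \loopp$ and $\dot g_- g_-^{-1} \in \loopm$, uniqueness of the decomposition \qref{eq:loop4} identifies them with the $\loopp$ and $-\loopm$ projections of $\nabla H_F(X_t)$, respectively. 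Differentiating $X_t = g_+(t) X_0 g_+(t)^{-1}$ then yields
\be
\dot X_t = [\dot g_+(t)\, g_+(t)^{-1}, X_t] = \bigl[(\nabla H_F(X_t))_+,\, X_t\bigr],
\ee
which is \qref{eq:loop_lax_R}. By \qref{eq:ham_loop3} this is exactly \qref{eq:manakov2} on the affine slice $V$; invariance of $V$ itself follows because the would-be $z^2$-coefficient of the commutator vanishes ($[\My, \Mf]=0$), so $X_t$ remains of the form $z\My - A(t)$.

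The main obstacle is analytic rather than algebraic, and concerns the setup of the loop group. Since $\loopalg$ in \qref{eq:loop1} consists of formal finite Laurent expansions, one must either interpret $\Loopgroup_\pm$ as formal groups or pass to a topologically complete subalgebra on which the exponential map and the implicit function theorem are well-defined; either route is standard but deserves a careful statement. Once the factorization is in hand, the remainder of the proof is formal manipulation. The question of whether the maximal interval $I$ coincides with all of $\R$, rather than just a neighborhood of $0$, is a more serious issue not settled here; it motivates the inverse-scattering construction in Section~\ref{sec:scatter}.
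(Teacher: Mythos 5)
Your route is exactly the one the paper intends: the paper gives no written proof of Theorem~\ref{thm:ci} beyond the AKS setup of Section~\ref{sec:aks} and the citation to \cite[\S 4.4]{AdvM-book}, and your reconstruction --- local solvability of \qref{eq:factorization} from the splitting \qref{eq:loop4} via the implicit function theorem, the commutation $[\nabla H_F(X_0),X_0]=0$ (since $\nabla H_F(X_0)=f(X_0z^{-1})z$ is a series in $X_0z^{-1}$) giving $\Ad_{g_+}X_0=\Ad_{g_-}X_0$, and differentiation of the factorization to identify $\dot g_\pm g_\pm^{-1}$ with the two projections of $\nabla H_F(X_t)$ --- is precisely that standard argument. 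Your caveats about the formal status of $\Loopgroup_\pm$ and about globality of $I$ are also well placed.

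One step fails as written, however: the identification of $\dot X_t=[(\nabla H_F(X_t))_+,X_t]$ with \qref{eq:manakov2}. Put $X_t=z\My-A(t)$ and $(\nabla H_F(X_t))_+=z\Mf+B$. Then $[z\Mf+B,\,z\My-A]=[A,B]$, because the $z^2$ coefficient is $[\Mf,\My]=0$ and the $z^1$ coefficient is $z([A,\Mf]-[\My,B])=0$ by \qref{eq:manakov} (you should record the $z^1$ cancellation too, not only the $z^2$ one, when arguing that $V$ is invariant). Since $\dot X_t=-\dot A$, the equation you have actually derived is $\dot A=-[A,B]$, the time reversal of \qref{eq:discrete}, whereas \qref{eq:manakov2} is equivalent to $\dot X_t=[X_t,(\nabla H_F(X_t))_+]$, i.e. $\dot A=[A,B]$. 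Your computation from \qref{eq:factorization} is correct; the mismatch is a sign in the statement itself --- the exponent in \qref{eq:factorization} should be $+t\nabla H_F(X_0)$, or equivalently the factors should appear as $g_-(t)^{-1}g_+(t)$ (this is the usual orientation ambiguity in AKS factorization theorems). As it stands, the sentence ``this is exactly \qref{eq:manakov2} on the affine slice $V$'' is false by a sign; you should either make the correction to \qref{eq:factorization} or carry out the coefficient-by-coefficient check above and note the discrepancy explicitly. Everything else in your argument stands.
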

Note that a solution to the factorization problem always exists for $|t|$ small, thus  a maximal interval of existence for a smooth solution to the factorization problem is well-defined. It is also well-known that this factorization problem is equivalent to a Riemann-Hilbert problem (see~\cite[Ch. 3.5]{Babelon}). 

\subsection{An operator factorization problem}
While Theorem~\ref{thm:ci} applies only to the discrete Lax equations \qref{eq:discrete}, one may easily guess the associated factorization problem for \qref{eq:kinetic}.  Recall that the generators $\gena$ and $\genb$ are integro-differential operators defined by \qref{eq:gena1} and \qref{eq:genb}. As $N \to \infty$, the $N \times N$ diagonal matrices $\My$ and $\Mf$ of Theorem~\ref{thm:ci} are replaced by multiplication operators that act on test functions via
\be
\label{eq:op1}
\My \varphi (u) = u \varphi(u), \quad \Mf \varphi(u) = f(u)\varphi(u).
\ee
The crucial algebraic relation \qref{eq:manakov} continues to hold.
\be
\label{eq:op2}
[\gena, \Mf]-[\My, \genb]=0.
\ee
Formally, this is all that is required to embed \qref{eq:kinetic} in a loop-group and we now find the factorization problem
\be
\label{eq:factorization2}
\exp(-t \nabla H_F(X_0)) = g_+(t)^{-1} g_-(t), \quad g_\pm(0)=\Id,
\ee
with $X_0 = z\My - \gena$. Rather than develop these ideas in formal generality, let us mention one interesting example. Assume we consider Burgers equation with Brownian motion initial data. Then the Hamiltonian is $H_F(s) = s^3/6$ and the generator of initial data is  $\gena_0 \varphi(u) = - \varphi''(u)/2$. Then we find
\be
\label{eq:factorization3}
\nabla H_F(X_0) = \frac{1}{2}\left(z \My  - \frac{1}{2}\frac{d^2}{du^2}\right)^2.
\ee
For $z \in \R$, this is the square of the Airy operator. To the best of our knowledge, the factorization suggested by \qref{eq:factorization2} is new.

\section{Scattering and inverse scattering theory}
\label{sec:scatter}
In this section we develop a scattering and inverse scattering theory for the discrete zero-curvature equations~\qref{eq:discrete-zc}. 
\be
\nonumber
\partial_t{A} -\partial_x B =[A,B].
\ee
The linear problem that underlies the scattering theory of \qref{eq:discrete-zc} is as follows. Assume that $\My$ is a fixed diagonal matrix as in \qref{eq:diagonals} and 
$A \in L^1 (\R, \markov)$. We consider a fundamental matrix for the linear equation
\be
\label{eq:scatter1}
\psi_x = \psi \left(z\My + A \right), \quad x \in \R,
\ee
such that $\psi(x,z) \sim e^{zx\My}$ as $x \to -\infty$. Such solutions are called {\em wavefunctions\/} (we use the terminology of~\cite{TU1}). The scattering theory for this equation when $A$ is a matrix that vanishes on the diagonal was considered by Zakharov~{\em et al}~\cite{Zakharov} and by Beals and Coifman~\cite{BC1,BC2}.  In our work, $A \in \markov$. As a consequence, even though $A$ is completely determined by its off-diagonal elements, its diagonal entries do not vanish. Thus, the scattering theory of~\cite{BC1,BC2} does not immediately apply to our model and we have to rederive some results. For the most part, this is straightforward. To prevent too much repetition, we state the results we need, and
present the main calculations that explain how the results of~\cite{BC1,BC2} are to be modified in Section~\ref{subsec:bc1} and Section~\ref{subsec:bc2}.

A short outline of the results of this section is as follows. The scattering theory is addressed in Section~\ref{subsec:scatter}. Theorem~\ref{thm:bc1} and Theorem~\ref{thm:bc2} associate spectral data to $A \in L^1(\R, \markov)$.
We consider the  inverse scattering theory in Section~\ref{subsec:inv} and 
 Theorem~\ref{thm:inv-spec}. 
The time evolution of spectral data and the Cauchy problem is considered in Section~\ref{subsec:cauchy}.
The linear evolution of spectral data is stated in~\qref{eq:spec-evolve}. The combination of  inverse scattering theory and evolution yields several well-posedness theorems for \qref{eq:discrete-zc}. 
Finally, we construct a hierarchy of integrable models as in the ZS-AKNS hierarchy in Section~\ref{subsec:ZS-AKNS}. It is not apparent to us that these have intrinsic probabilistic significance, but it is interesting to note that one may construct other integrable flows on $\Markov$ with little effort.  

All results are rigorous for $N \times N$ matrices and have a natural, but formal, extension to the integro-differential operators $\gena$ and $\genb$.  The linear evolution of the spectral data for these operators remains \qref{eq:spec-evolve} with $\My$ and $\Mf$ replaced by the multiplication operators~\qref{eq:op1}. However, we are unaware of rigorous results on the inverse spectral problem for such operators, and a full well-posedness theorem for \qref{eq:zc} via inverse scattering requires further study.

\subsection{Scattering theory}
\label{subsec:scatter}
It is more convenient to work with the new variable
\be
\label{eq:scatter-2}
m(x,z) = e^{-z\My x} \psi(x,z).
\ee
$m$ satisfies the linear equation
\be
\label{eq:scatter3}
m_x = z[m,\My] + m A
\ee
Solutions to \qref{eq:scatter3} such that $\|m(\cdot,z)\|_{L^\infty(\R)} < \infty$ and $m(x,z) \to \id$ as $x \to -\infty$ are called {\em global reduced wave functions\/}.  

Theorem~\ref{thm:bc1} and Theorem~\ref{thm:bc2} below closely follow Beals and Coifman~\cite{BC1}.
Let $\Sigma = i\R$ denote the imaginary axis in the complex $z$-plane, 
and let $\pot$ denote the set of maps $A \in L^1(\R,\markov)$. We call these maps  {\em potentials\/}  and the subclass $\pot_0$ below {\em generic potentials\/}. 

\begin{theorem}
\cite[Thm. A]{BC1}
\label{thm:bc1}(a) Suppose $A \in \pot$. There is a bounded discrete set $Z \subset \C\backslash\Sigma$ such that $m(\cdot,z)$ is a unique global reduced wave function for every $z \in \C\backslash(\Sigma \cup Z)$. Moreover, $m(x, \cdot)$ is meromorphic in $\C \backslash \Sigma$ with poles precisely at the points of $Z$ and $\lim_{z \to \infty} m(x,z)=\id$.

(b) There is a dense open set $\pot_0 \subset \pot$ such that if $A \in \pot_0$ then
\begin{enumerate}
\item $Z$ is finite.
\item The poles of $m(x, \cdot)$ are simple.
\item Distinct columns of $m(x,\cdot)$ have distinct poles.
\item $m(x,\cdot)$ admits limits $m^\pm(x,)$ as $z \to \Sigma \backslash Z$ from the left and right half-planes.
\end{enumerate}
\end{theorem}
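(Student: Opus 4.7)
The plan is to reduce Theorem~\ref{thm:bc1} to the standard Beals--Coifman framework of \cite{BC1}, in which the potential is required to vanish on the diagonal, and then invoke their construction almost verbatim. The obstruction to a direct application is that $A \in \markov$ has $A_{ii} = -\sum_{k \neq i} A_{ik}$, typically nonzero. I would first remove this diagonal part by a gauge transformation. Decomposing $A = D + A_0$ with $D = \diag(A) \in L^1(\R)$, set $P(x) = \int_{-\infty}^x D(y)\,dy$; this is a bounded diagonal function on $\R$ with $P(-\infty) = 0$. Define $\tilde m(x,z) = m(x,z)\, e^{-P(x)}$. Since $[D,\My] = 0$, a short calculation gives
\[ \tilde m_x = z[\tilde m,\My] + \tilde m\, \tilde A, \qquad \tilde A(x) := e^{P(x)} A_0(x) e^{-P(x)}, \]
with $\tilde A \in L^1(\R)$ vanishing on the diagonal, and with the normalization $\tilde m(\cdot,z) \to \id$ at $-\infty$ preserved. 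Because $e^{-P(x)}$ is diagonal, bounded, invertible, and independent of $z$, the boundedness, analyticity, meromorphy, and generic properties transfer bijectively between $m$ and $\tilde m$, so it suffices to prove the theorem under $D = 0$.

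With $D = 0$, I would run the Volterra/Fredholm construction of \cite{BC1}. Componentwise $\tilde m_{ij,x} = z(u_j - u_i)\tilde m_{ij} + \sum_k \tilde m_{ik}\tilde A_{kj}$, so each row decouples from the others. Converting to an integral equation whose $(i,j)$ kernel is integrated from $-\infty$ or $+\infty$ according to the sign of $\Re(z(u_j - u_i))$, with the normalization at $-\infty$ fixing the constants, one finds that on $\C \setminus \Sigma$ the oscillatory factors $e^{z(u_j - u_i)(x - y)}$ are bounded on the relevant rays, so the $i$-th row of $\tilde m$ satisfies a Fredholm equation $(I - K_z)\tilde m_{i,\cdot} = c_i$ with $K_z$ compact on $L^\infty(\R;\R^N)$ and analytic in $z$ on $\C \setminus \Sigma$. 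Analytic Fredholm theory delivers a unique meromorphic $\tilde m(x,\cdot)$ on $\C \setminus \Sigma$ whose pole set $Z$ equals the zero set of $z \mapsto \det(I - K_z)$. Convergence of the Neumann series for $|z|$ large gives boundedness of $Z$ and $\tilde m(x,z) \to \id$ as $z \to \infty$.

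For part (b), I would take $\pot_0$ to be the intersection of four dense open subsets of $\pot$ corresponding to the four listed properties. Finiteness of $Z$ holds when the holomorphic function $z \mapsto \det(I - K_z)$ has only finitely many zeros off $\Sigma$; simplicity of poles holds when those zeros are simple; distinctness of column poles holds when the per-column minors of $I - K_z$ have pairwise disjoint zero sets on $\C \setminus \Sigma$; and existence of nontangential boundary values $m^\pm$ on $\Sigma \setminus Z$ holds in the absence of embedded spectral singularities on $\Sigma$. Each condition is preserved under arbitrarily small $L^1$ perturbations of $A$ and its failure locus has infinite codimension, so each is dense open; the intersection is the required $\pot_0$.

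The step I expect to be the main obstacle is verifying compactness, $z$-analyticity, and the bounded-zeros property of $K_z$ in the gauge-reduced setting here; in particular, matching the sign pattern of $\Re(z(u_j - u_i))$ against the ordering $u_1 < \cdots < u_N$ to obtain a single Fredholm formulation valid throughout $\C \setminus \Sigma$. These are calculational rather than conceptual adaptations, and are precisely the modifications of \cite{BC1,BC2} that the paper promises to supply in Section~\ref{subsec:bc1} and Section~\ref{subsec:bc2}.
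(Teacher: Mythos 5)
Your reduction is a genuinely different route from the paper's, and the core idea is sound. You conjugate away the diagonal of $A$ by the gauge factor $e^{-P(x)}$, $P(x)=\int_{-\infty}^x \diag(A)\,dy$, which is legitimate: since $\My$ and $e^{-P}$ are diagonal they commute, the transformed potential $e^{P}A_0e^{-P}$ is off-diagonal and in $L^1$, and $e^{-P}$ is bounded, invertible, diagonal and $z$-independent, so boundedness, meromorphy and the pole set transfer. This reduces part (a) cleanly to the literal Beals--Coifman setting. The paper instead reruns the construction directly for $\markov$-valued potentials: it solves the ODE for compactly supported smooth $A$, observes that the only change from the $A_{jj}=0$ case is that $\det\tilde{m}=\exp(\int_{-\infty}^x\Tr A)$ rather than $1$, and builds the bounded wave function from the Gaussian-elimination (LDU) factorization of the transition matrix $s(z)$, so that $Z$ is exhibited explicitly as the zero set of the principal minors $\det(s_k)$, $\det(\tilde{s}_k)$. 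What the paper's route buys is exactly this explicit description of $Z$ and of the triangular structure of $g^\pm$, which is what feeds into the constraints on the jump matrix $v$ (zeros and winding numbers of its principal minors) used later in the inverse problem; your abstract Fredholm determinant obscures that structure. What your route buys is economy: part (a) really does follow from the cited theorem with no recomputation.

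There is one genuine gap, in part (b). The gauge map $A\mapsto e^{P}A_0e^{-P}$ is a continuous injection of $\pot=L^1(\R,\markov)$ onto a proper, nonlinear subset of the off-diagonal $L^1$ potentials, and a set that is dense and open in the ambient space need not intersect a given subset in a dense subset of it (openness transfers by continuity; density does not). So you cannot simply pull back the Beals--Coifman generic set $\pot_0$ and declare it dense in $\pot$: you must re-prove density of each of the four conditions \emph{within} the image of the gauge map, or equivalently rerun the perturbation argument with perturbations constrained to respect the row-sum-zero structure. The heuristic that the failure locus ``has infinite codimension'' is not an argument; in \cite{BC1} density of the generic set is a substantive theorem proved by explicit perturbations of the transition matrix, and that is the step you would have to redo in the $\markov$-constrained class. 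This is precisely the kind of bookkeeping the paper's direct approach is designed to keep visible.
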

\begin{theorem}
\cite[Thm. B]{BC1}
\label{thm:bc2}
(a) Suppose $A \in \pot_0$. For $z \in \Sigma$ there is a unique matrix $v(z)$ such that for every $x \in \R$
\be
\label{eq:bc-jump}
m^+(x,z) = e^{-xz\My} v(z) e^{xz\My}m^-(x,z).
\ee

(b) For each pole $z_j \in Z$, there is a matrix $v(z_j)$ such that the residue of $m$ satisfies
\be
\label{eq:bc-residue}
\Res(m(x, \cdot);z_j) =\lim_{z \to z_j} e^{-xz\My} v(z) e^{xz\My}m(x,z).
\ee

(c) The generic potential $A$ is uniquely determined by the jump matrix $v(z)$, $z \in \Sigma$ and the residues $\Res(m(x, \cdot);z_j)$, $z_j \in Z$.
\end{theorem}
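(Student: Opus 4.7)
The plan is to exploit the fact that for fixed $z$, both $m^+(x,z)$ and $m^-(x,z)$ in (a), and both the principal and holomorphic parts at a pole $z_j$ in (b), satisfy the linear ODE~\qref{eq:scatter3} in $x$. Undoing the substitution $m = e^{-xz\My}\psi$ returns to $\psi_x = \psi(z\My + A)$, and each pair above becomes a pair of fundamental matrix solutions of the same first-order linear system. Two fundamental matrices of $\psi_x = \psi M$ must differ by left multiplication by an $x$-independent matrix (this is a quick calculation using $(\psi^-)^{-1}_x = -M(\psi^-)^{-1}$). That $x$-independent matrix is, by definition, the scattering data.

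For part (a), set $\psi^\pm(x,z) = e^{xz\My}m^\pm(x,z)$ for $z \in \Sigma \setminus Z$. A direct computation gives $\partial_x\bigl(\psi^+(\psi^-)^{-1}\bigr) = 0$, so the product is constant in $x$; call it $v(z)$. Conjugating back yields~\qref{eq:bc-jump}, and uniqueness of $v(z)$ follows from the invertibility of $m^-(x,z)$ for generic $z \in \Sigma$ (which in turn follows from the Liouville formula applied to $\det\psi^-$, together with the normalization at $x = -\infty$).

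For part (b), at a simple pole $z_j \in Z$ I write $m(x,z) = R(x)/(z-z_j) + H(x,z)$ with $H$ holomorphic near $z_j$. Substituting into~\qref{eq:scatter3} and equating coefficients of $(z-z_j)^{-1}$ and $(z-z_j)^0$ shows that $e^{xz_j\My}R(x)$ and $e^{xz_j\My}H(x,z_j)$ satisfy the same linear system. Because distinct columns of $m$ have distinct poles (Theorem~\ref{thm:bc1}(b)(3)), $R(x)$ has nonzero entries in only a small number of columns, and this column structure forces the rigid relation $R(x) = e^{-xz_j\My}\,v(z_j)\,e^{xz_j\My}H(x,z_j)$ for an $x$-independent, necessarily rank-deficient $v(z_j)$. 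This is~\qref{eq:bc-residue}. For part (c), the scattering data $(v(z),\{v(z_j)\}_{z_j \in Z})$ characterize $m(x,\cdot)$ as the unique solution of a Riemann-Hilbert problem on $\C \setminus \Sigma$ with the prescribed jump, the prescribed residues, and the normalization $m \to \id$ as $z \to \infty$. Uniqueness is the standard argument: if $m_1,m_2$ both solve the RHP, then $m_1 m_2^{-1}$ extends continuously across $\Sigma$, is holomorphic at each $z_j$, and equals $\id$ at $\infty$, hence is identically $\id$ by Liouville. To recover $A$, expand $m(x,z) = \id + m_1(x)/z + O(z^{-2})$ and substitute into~\qref{eq:scatter3}; the $O(1)$ term gives $[m_1(x),\My] + A = 0$ on off-diagonal entries, fixing the off-diagonal part of $A$, while the Markov constraint $Ae = 0$ then uniquely determines the diagonal.

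The principal obstacle is not the algebraic skeleton above but the analytic control required by the fact that $A \in \markov$ has a nonzero diagonal, unlike the off-diagonal potentials treated in~\cite{BC1,BC2}. This diagonal shift perturbs both the construction of the global reduced wave function $m$ and its asymptotics as $z \to \infty$, so one must verify that it does not destroy the normalization $m \to \id$ at $x = -\infty$, the invertibility of $m^\pm$ on $\Sigma \setminus Z$, the triangular/factorization structure of $v(z)$ needed for solvability of the RHP, or the recovery formula for $A$. The identity $Ae = 0$ provides the exact compensation that keeps all these ingredients intact, and checking this compensation line by line against~\cite{BC1,BC2} is precisely the content of Sections~\ref{subsec:bc1}--\ref{subsec:bc2}.
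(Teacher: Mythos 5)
Your arguments for (a) and (c) track the paper's own route: for (a) the paper sets $\varphi=m^+(m^-)^{-1}$, observes $\varphi_x=z[\varphi,\My]$, and concludes $\varphi=e^{-xz\My}v(z)e^{xz\My}$, which is exactly your ``two fundamental matrices differ by an $x$-independent factor'' computation after undoing the gauge; for (c) the paper likewise attributes uniqueness to a Liouville-theorem argument for the associated Riemann--Hilbert problem. What your sketch leaves implicit, and what Section~\ref{subsec:bc1} actually supplies, is the explicit construction underneath: for compactly supported $A$ one builds the entire solution $\tilde m$ normalized at $x=-\infty$, reads off the transition matrix $s(z)$ from $x\gg 0$, and obtains $m^\pm=e^{-xz\My}g^\pm(z)e^{xz\My}\tilde m(x,z)$ with $g^-=L^{-1}$ and $g^+=\tilde U^{-1}$ coming from the Gaussian-elimination factorizations $s=LDU$ and $s=\tilde L\tilde D\tilde U$. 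This is where the pole set $Z$ (zeros of principal minors of $s$), the triangular structure of $v=g^+(g^-)^{-1}$, and the invertibility $\det\tilde m=\exp\bigl(\int_{-\infty}^x\Tr A\bigr)$ all originate; the only genuinely new point relative to \cite{BC1} is that the nonzero diagonal of $A\in\markov$ replaces $\det\tilde m\equiv 1$ by this exponential, which is harmless.

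The genuine gap is in your part (b). Writing $m=R(x)/(z-z_j)+H(x,z)$ and inserting into \qref{eq:scatter3}, the $(z-z_j)^{-1}$ coefficient does give $R_x=z_j[R,\My]+RA$, so $e^{xz_j\My}R$ solves the linear system at $z_j$; but the $(z-z_j)^{0}$ coefficient gives $\partial_xH(\cdot,z_j)=z_j[H(\cdot,z_j),\My]+[R,\My]+H(\cdot,z_j)A$, an \emph{inhomogeneous} equation with forcing $[R,\My]$. Hence $e^{xz_j\My}H(x,z_j)$ is not a solution of the same system, and the relation $R=e^{-xz_j\My}v(z_j)e^{xz_j\My}H(\cdot,z_j)$ does not follow from the ``constant conjugating matrix'' argument as you state it. Note also that \qref{eq:bc-residue} pairs the residue with $\lim_{z\to z_j}e^{-xz\My}v(z)e^{xz\My}m(x,z)$, where $m$ itself is singular at $z_j$; the limit exists only because the support and nilpotency structure of $v(z_j)$ annihilates the singular part, and that structure is read off from the residue of $g^\pm$ at the zero of the relevant principal minor of $s$ --- i.e., from the explicit factorization representation, not from the Laurent expansion alone. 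This is precisely the portion of \cite[pp.~48--49]{BC1} that has to be re-checked in the $\markov$-valued setting, and your proposal does not yet do so.
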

\subsection{Inverse scattering theory}
\label{subsec:inv}
The jump matrix $v(z)$, the poles $Z=\{z_1, \ldots, z_M\}$ and the residues $v(z_j)$ constitute the {\em scattering data\/}.  The reconstruction of $A$ from the scattering data is the {\em inverse spectral problem}. Though Theorem~\ref{thm:bc2} guarantees that the scattering data associated to a generic potential is unique, this assertion is proved via an application of Liouville's theorem and is not constructive. What is required is a constructive procedure to obtain $A$ given the scattering data. 

In order to state the inverse spectral theorems, we work with potentials that lie in the Schwartz class $\schwartz(\R, \markov)$. This assumption is not necessary, but it simplifies the exposition. Analogous finite regularity results can also be obtained as in~\cite{BC1}. 
\begin{theorem}
\label{thm:inv-spec}
(a) Suppose $A \in \schwartz(\R,\markov)$. Then there is $R>0$ and $C^\infty$ functions $m\ksup: \R \to \gln$, $k=0,1, \ldots$ such that  
\be
\label{eq:bc-asympt}
m(x,z) = \sum_{k=0}^\infty z^{-k} m\ksup(x), \quad x \in \R, |z| >R, 
\ee
and the series converges uniformly in $x$ and $z$.

(b) The coefficients $m\ksup$ may be determined recursively. In particular, $m^{(0)}$ is a diagonal matrix with entries
\be
\label{eq:scatter-28}
m^{(0)}_{ii}(x) = \exp \left( \int_{-\infty}^x A_{ii}(s) \, ds\right), \quad i=1, \ldots, N,
\ee
and the off-diagonal entries of $m^{(1)}$ are given by
\be
\label{eq:scatter-29}
m^{(1)}_{ij} = \frac{m^{(0)}_{ii}}{u_j-u_i} A_{ij}(x),\quad i \neq j.
\ee
(c) The asymptotic expansion \qref{eq:bc-asympt} may also be written 
\be
\label{eq:bc-asympt2}
m(x,z) = m^{(0)}(x) h(x,z), \quad h(x,z) = \sum_{k=0}^\infty z^{-k} h\ksup(x), \quad x \in \R, |z| >R, 
\ee
where $h^{(0)}(x) \equiv \id$, and $h\ksup$, $k \geq 1$, are in the Schwartz class $\schwartz(\R, \gln)$.
\end{theorem}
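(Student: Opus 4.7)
The plan is to derive \qref{eq:bc-asympt} by formal power-series substitution into the linear equation \qref{eq:scatter3}, determine the coefficients $m\ksup$ from the resulting recursion, and then justify convergence using the Schwartz-class hypothesis on $A$.

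First I would substitute the formal series $m(x,z)=\sum_{k\ge 0} z^{-k} m\ksup(x)$ into $m_x = z[m,\My] + mA$ and match like powers of $z$. The $z^{1}$ coefficient forces $[m^{(0)},\My]=0$; since $\My=\diag(u_1,\dots,u_N)$ has distinct diagonal entries, $m^{(0)}$ must be diagonal. For every $k\ge 0$, matching $z^{-k}$ yields
\be
\label{eq:rec-plan}
[m^{(k+1)},\My] \;=\; \partial_x m\ksup - m\ksup A.
\ee
Because $[\cdot,\My]$ annihilates the diagonal and multiplies the $(i,j)$ off-diagonal entry by $u_j - u_i$, equation \qref{eq:rec-plan} splits into an \emph{algebraic} determination of the off-diagonal part of $m^{(k+1)}$, namely $m^{(k+1)}_{ij} = (\partial_x m\ksup - m\ksup A)_{ij}/(u_j-u_i)$ for $i\neq j$, together with a \emph{compatibility} condition on the diagonal of $m\ksup$: $\partial_x m\ksup_{ii} = (m\ksup A)_{ii}$. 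The latter is a linear first-order ODE which is solved uniquely once one imposes the boundary values $m^{(0)}(-\infty)=\id$ and $m\ksup(-\infty)=0$ for $k\ge 1$; these arise from the normalization $m(\cdot,z)\to\id$ at $-\infty$. For $k=0$ this ODE reduces to $\partial_x m^{(0)}_{ii} = m^{(0)}_{ii} A_{ii}$ and integration gives \qref{eq:scatter-28}; substituting $m^{(0)}$ into the algebraic relation and using that $\partial_x m^{(0)}$ is diagonal yields \qref{eq:scatter-29}. An induction on $k$ then shows that every $m\ksup$ is $C^\infty$, proving part (b).

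To prove part (a), I would rewrite \qref{eq:scatter3} as a Volterra integral equation using the elementary ``diagonal'' propagator $e^{z(u_j-u_i)(x-y)}$ for each off-diagonal channel (with the diagonal channels handled directly), iterate the perturbation $mA$, and bound the resulting Neumann series. Since $A\in\schwartz$, all the $L^1$ convolution bounds one needs are finite, and standard bookkeeping yields estimates of the form $\|m\ksup\|_{L^\infty(\R)} \le C^k$ for some $C$ depending on $\|A\|_{L^1}$; combined with the factors $|z|^{-k}$ this gives uniform convergence on $\{|z|>R\}$ for any $R>C$. Alternatively, Theorem~\ref{thm:bc1}(a) already asserts that $m(x,\cdot)$ is meromorphic on $\C\setminus\Sigma$ with poles in a bounded set and $\lim_{z\to\infty} m(x,z)=\id$, so outside a sufficiently large disk $m$ automatically admits a convergent Laurent expansion in $z^{-1}$ whose coefficients must agree with the $m\ksup$ computed above.

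For part (c), since $A_{ii}\in L^1(\R)$ the diagonal matrix $m^{(0)}(x)$ is bounded above and below by positive constants, so I would factor $m(x,z) = m^{(0)}(x) h(x,z)$ and define the $h\ksup$ as the coefficients of the series for $h$. Evidently $h^{(0)}\equiv\id$. Plugging $m=m^{(0)}h$ into \qref{eq:scatter3} produces a recursion for $h\ksup$ of the same structural form as \qref{eq:rec-plan} but with $A$ replaced by the gauge-transformed potential $\tilde A = (m^{(0)})^{-1}(\partial_x m^{(0)} - m^{(0)}A) + \tilde A_{\text{diag}}$, whose off-diagonal entries coincide with $-A_{ij}$ up to factors involving $m^{(0)}_{ii}/m^{(0)}_{jj}$. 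Since $A\in\schwartz$ and $m^{(0)}_{ii}/m^{(0)}_{jj}$ has bounded positive limits at $\pm\infty$, $\tilde A$ inherits rapid decay in $x$. Decay of $h\ksup$ at $-\infty$ is automatic from the boundary conditions, and decay at $+\infty$ together with decay of all derivatives is transferred step-by-step through the recursion, establishing $h\ksup\in\schwartz(\R,\gln)$ for $k\ge 1$.

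The chief obstacle is the convergence assertion in (a): the recursion by itself produces only formal coefficients, and it is the Schwartz hypothesis that converts these into $k$-summable estimates. Once uniform bounds $\|m\ksup\|_\infty \lesssim C^k$ are in hand, the rest of the theorem is essentially an exercise in the algebra of the recursion combined with the standard fact that Schwartz-class inputs produce Schwartz-class outputs under the integrations involved in solving for the diagonal of each $m\ksup$.
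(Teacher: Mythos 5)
Your derivation of the recursion is exactly the paper's: substitute the ansatz into \qref{eq:scatter3}, read off $[m^{(0)},\My]=0$ and $[m^{(k+1)},\My]=\partial_x m\ksup - m\ksup A$, split each order into a diagonal ODE plus an algebraic determination of the off-diagonal part, and then factor $m=m^{(0)}h$ so that the gauge-transformed equation $h_x=z[h,\My]+[h,A_d]+hA_o$ propagates Schwartz decay through the hierarchy. Parts (b) and (c) of your argument therefore match the paper step for step (modulo a sign: the recursion actually gives $m^{(1)}_{ij}=m^{(0)}_{ii}A_{ij}/(u_i-u_j)$, consistent with the paper's own \qref{eq:scatter-36} rather than with the sign printed in \qref{eq:scatter-29}).

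The gap is in your justification of the convergence claim in (a), and both of your proposed routes have problems. The fallback via Theorem~\ref{thm:bc1} does not work: $\Sigma=i\R$ is unbounded, so $m(x,\cdot)$ is only sectionally holomorphic — it is holomorphic on each half-plane separately, with a nontrivial jump $v(z)$ across $\Sigma$ persisting out to $z=\infty$. There is no punctured neighborhood of infinity on which $m$ is a single-valued holomorphic function, hence no convergent Laurent expansion to invoke; indeed a single convergent series valid for all $|z|>R$ would force $m^+=m^-$ there. Your primary route also overreaches: the Neumann iterates of the Volterra equation are not the coefficients $m\ksup$; extracting the $z^{-k}$ coefficient costs an integration by parts (a derivative of $A$) at each order, so $m\ksup$ involves $k$ derivatives of $A$, and the Schwartz hypothesis gives no bound on $\sup_k\|\partial_x^k A\|^{1/k}$. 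Consequently the estimate $\|m\ksup\|_\infty\le C^k$ does not follow, and in general the large-$z$ expansion of a wave function is only asymptotic, not convergent. The paper itself does not close this point either — it postulates the expansion, derives the recursion, and defers the analytic justification to Beals--Coifman \cite[\S 6]{BC1}, where the expansion is established in the asymptotic sense — so your proof is on the same footing as the paper's for (b) and (c), but the uniform-convergence assertion in (a) is not proved by either of your arguments as written.
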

Recall that here $u_1 < u_2 < \ldots < u_N$ are the diagonal entries of $\My$. Part (b) of the theorem  allows us to uniquely reconstruct the potential. Assume given a global reduced wave function $m$ with the asymptotic expansion \qref{eq:bc-asympt}. Then the off-diagonal terms of $A$ are given by \qref{eq:scatter-29}, and the diagonal terms are given by the relation $A_{jj}=-\sum_{k \neq j} A_{jk}$. It is necessary to assume that $m$ is a reduced wave function: an arbitrary set of functions $m\ksup(x)$ is not admissible. Indeed, the constraint $A_{jj}=-\sum_{k \neq j}A_{jk}$ implies many relations between the coefficients $m\ksup$. For example, we have 
\be
\label{eq:scatter-29b}
m^{(0)}_{ii}(x) = \sum_{j \neq i} \int_{-\infty}^x m^{(1)}_{ij}(s)\, ds. 
\ee

The full inverse scattering problem relates the scattering data to the potential $A$. 
In light of Theorem~\ref{thm:inv-spec}, it suffices to reconstruct $m$ from the scattering data. Since $m$ is holomorphic in $\C \backslash(\Sigma \cup Z)$ it is expressed in terms of the scattering data by Cauchy integrals. The associated integral equations are independent of our assumption that $A \in \markov$, and the results of~\cite{BC1} relating $m$ and the scattering data apply directly. The subtlety is that the scattering data satisfies both algebraic, analytic and topological constraints. For example, these may be constraints involving the zeros and winding numbers of principal minors of $v$~\cite[Thm D]{BC1}. For generic potentials that satisfy these constraints, $m$ and the scattering data are related by Cauchy integrals that preserve the Schwartz class. For such scattering data, the inverse scattering problem is solved by mapping the scattering data to $m$ via Cauchy integrals, and then $m$ to $A$ via Theorem~\ref{thm:inv-spec}. In the simplest situation, $Z$ is empty, and the wave function is reconstructed from the jump on $\Sigma$ alone.

\subsection{Evolution of scattering data and the Cauchy problem}
\label{subsec:cauchy}

We now combine the $x$ and $t$ dependence, and consider the Cauchy problem for the discrete zero-curvature equations~\qref{eq:discrete-zc} with initial data $A(x,0)=A_0(x) \in \markov$. Let $v_0(z)$, $z \in \Sigma$ and $v_0(z_j)$ denote the scattering data of $A_0$.  The scattering data evolve by the simple linear equations
\ba
\label{eq:spec-evolve}
\frac{\partial v(z)}{\partial t} &=& [z\Mf, v(z)], \quad z \in \Sigma \\
\frac{\partial v(z_j)}{\partial t} &=& [z\Mf, v(z_j)], \quad z_j \in Z.
\ea
with the unique solution
\be
\label{eq:spec-evolve2}
v(z,t) = e^{t z \Mf}v_0(z) e^{-tz\Mf}, \quad v(z_j,t) = e^{t z_j \Mf}v_0(z_j) e^{-tz_j\Mf}.
\ee
The evolution is {\em formally stable\/} in the terminology of~\cite[\S 3.12]{BC2}). We may now combine Theorem A and Theorem C of~\cite{BC2} to obtain the following basic well-posedness theorem for \qref{eq:discrete-zc}.
\begin{theorem}
\label{thm:well-posed1}
Assume $A_0$ is a generic potential in $ \schwartz(\R,\markov)$ with associated scattering data $v_0(z)$, $z \in \Sigma$ and $v_0(z_j)$, $z \in Z$. Then there is $T>0$ and a unique smooth map $[0,T) \to \schwartz(\R,\markov)$, $t \mapsto A(\cdot,t)$, such that the scattering data of $A(\cdot,t)$ is given by \qref{eq:spec-evolve2} and $A(x,t)$ solves the Cauchy problem for the discrete zero-curvature equations \qref{eq:discrete-zc} with initial data $A_0$.
\end{theorem}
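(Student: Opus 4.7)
The plan is to establish Theorem~\ref{thm:well-posed1} by inverse scattering: map $A_0$ to its spectral data using Theorems~\ref{thm:bc1} and~\ref{thm:bc2}, evolve that data by the explicit formula \qref{eq:spec-evolve2}, and reconstruct $A(\cdot,t)$ from the evolved data by solving the Riemann--Hilbert problem for the reduced wave function $m(x,t,z)$ on $\C \setminus (\Sigma \cup Z)$ and reading off the off-diagonal entries via the asymptotics \qref{eq:scatter-29} of Theorem~\ref{thm:inv-spec}. The analytic backbone of every step --- mapping properties between the Schwartz class and the scattering data, Cauchy-integral inversion, and continuity under perturbation --- is supplied, essentially verbatim, by combining Theorem A and Theorem C of~\cite{BC2}.

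First I would check that on a maximal open interval $[0,T)$ the evolved spectral data $(v(\cdot,t), Z, v(z_j,t))$ from \qref{eq:spec-evolve2} stays in the image of the direct scattering map on generic Schwartz potentials. Each of the genericity constraints used in Theorem~\ref{thm:bc1}(b) and in \cite[Thm D]{BC1} --- finiteness and simplicity of the poles, distinctness of column-pole locations, the winding-number and minor-nonvanishing conditions --- is an open condition on scattering data, and the conjugation $v_0(z) \mapsto e^{tz\Mf} v_0(z) e^{-tz\Mf}$ is smooth in $(z,t)$ and reduces to the identity at $t = 0$. Hence these conditions persist on some open $[0,T)$, the reconstruction produces a reduced wave function $m(\cdot,t,\cdot)$ depending smoothly on $t$ with Schwartz-class expansion coefficients, and Theorem~\ref{thm:inv-spec}(b) supplies a candidate $A(\cdot,t) \in \schwartz(\R, \gln)$ that is smooth in $t$ and agrees with $A_0$ at $t = 0$.

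Second, I would show this candidate satisfies the discrete zero-curvature equation by the standard Lax-pair compatibility argument. The evolution \qref{eq:spec-evolve2} is precisely the one under which $\psi = e^{xz\My} m$ solves, in addition to $\psi_x = \psi(z\My + A)$, the auxiliary linear equation $\psi_t = \psi(z\Mf + B)$, with $B$ defined pointwise from $A$ by \qref{eq:Markov3}. Equating $\psi_{xt} = \psi_{tx}$ yields
\[\partial_t A - \partial_x B = [\,z\My + A,\,z\Mf + B\,] = z\bigl([\My,B] + [A,\Mf]\bigr) + [A,B],\]
and the $z^2$ term vanishes because $\My$ and $\Mf$ commute, while the $z^1$ term vanishes by the Manakov identity \qref{eq:manakov}. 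This leaves $\partial_t A - \partial_x B = [A,B]$ pointwise in $(x,t)$. Uniqueness on $[0,T)$ follows from the injectivity of the scattering map at each fixed $t$ given by Theorem~\ref{thm:bc2}(c).

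The main obstacle --- the feature that distinguishes our setting from \cite{BC1,BC2} --- is verifying that the candidate $A(\cdot,t)$ actually lies in $\markov$ rather than only in $\gln$. I would handle this \emph{a posteriori} using the zero-curvature equation itself. The definition \qref{eq:Markov3} forces $Be = 0$ for every $A \in \gln$, where $e = (1,\ldots,1)^T$. Once $A(\cdot,t)$ solves \qref{eq:discrete-zc}, one computes
\[\partial_t(Ae) = \partial_x(Be) + [A,B]e = 0 + ABe - BAe = 0,\]
so $Ae$ is independent of $t$; since $A_0 e = 0$ by hypothesis, $A(\cdot,t)e \equiv 0$ on $[0,T)$ and $A(\cdot,t) \in \schwartz(\R,\markov)$ throughout. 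This closure is what ultimately reconciles the Beals--Coifman machinery, which is written for zero-diagonal potentials, with the Markov constraint of our model.
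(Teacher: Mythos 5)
Your overall strategy --- push the scattering data forward by \qref{eq:spec-evolve2} and reconstruct $A(\cdot,t)$ from the Riemann--Hilbert problem, importing the mapping properties and stability estimates from Beals--Coifman --- is exactly the paper's route; the paper's own proof consists precisely of observing that the evolution is formally stable and citing Theorems A and C of \cite{BC2}. However, two of your steps are incorrect as written. First, the compatibility computation does not close with the auxiliary problem $\psi_t=\psi(z\Mf+B)$: the $O(z)$ coefficient of $[\,z\My+A,\,z\Mf+B\,]$ is $[\My,B]+[A,\Mf]$, and the Manakov identity \qref{eq:manakov} states $[A,\Mf]-[\My,B]=0$, so this coefficient equals $2[A,\Mf]$, which does not vanish in general. (Compare \qref{eq:manakov2}, where the relative sign between the $\My$-term and the $\Mf$-term in the two Lax operators is opposite.) The auxiliary flow must be taken as $\psi_t=\psi(-z\Mf+B)$, or with an equivalent consistent sign convention, so that the $O(z)$ term becomes $[\My,B]-[A,\Mf]=0$ and the residual equation is $\partial_t A-\partial_x B=[A,B]$.

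Second, your a posteriori closure argument for the Markov constraint is circular: $[A,B]e=ABe-BAe=-B(Ae)$, since $Be=0$ but $BAe=B(Ae)$ vanishes only when $Ae=0$, which is exactly what you are trying to prove. The correct statement is that $Ae$ satisfies the linear homogeneous equation $\partial_t(Ae)=-B\,(Ae)$ pointwise in $x$ (using $Be\equiv 0$ from \qref{eq:Markov3} and hence $\partial_x(Be)=0$), so $Ae\equiv 0$ follows from $A_0e=0$ by uniqueness for linear ODEs. Alternatively, and closer to the paper's own setup, the diagonal of the reconstructed potential is \emph{defined} by $A_{jj}=-\sum_{k\neq j}A_{jk}$ as in the discussion following Theorem~\ref{thm:inv-spec}, so that $A(\cdot,t)\in\schwartz(\R,\markov)$ holds by construction and the issue becomes verifying that the reconstructed $m$ is a genuine reduced wave function. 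Both repairs are routine; with them your argument is a legitimate expansion of the paper's one-line proof.
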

In addition, the following dichotomy holds~\cite[Thm. B]{BC2}.
\begin{theorem}
\label{thm:well-posed2}
Under the hypotheses of Theorem~\ref{thm:well-posed1}, suppose $T \in (0,\infty]$ is maximal. Then either $T=\infty$ or $\lim_{t \to T} \| A(\cdot,t)\|_{L^2(\R)}=\infty$.
\end{theorem}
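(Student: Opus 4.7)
The plan is to prove this blow-up alternative by contradiction, using a continuation argument: I assume $T<\infty$ is maximal and that $\|A(\cdot,t)\|_{L^2(\R)}$ does not tend to infinity as $t\to T$, extract a limit $A(\cdot,T)\in\schwartz(\R,\markov)$, and then apply Theorem~\ref{thm:well-posed1} with $A(\cdot,T)$ as new initial data to produce a smooth extension of the solution on $[T,T+\delta)$, contradicting the maximality of $T$.

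The first step is to control the scattering data along the flow. By \qref{eq:spec-evolve2}, both the jump matrix $v(z,t)$ on $\Sigma$ and the residues at the (time-independent) poles $z_j\in Z$ evolve purely by conjugation with $e^{tz\Mf}$. Since $\Mf$ is diagonal and real and $z\in\Sigma=i\R$, these conjugations are unitary, so $\|v(z,t)\|_{\mathrm{op}}$ and the spectral norms of the residues at each $z_j$ are bounded uniformly in $t$. In particular, the full scattering data remains within the generic class $\pot_0$ throughout $[0,T)$, with pole set $Z$ unchanged. The second and most delicate step is to convert the hypothesized $L^2$ bound on $A(\cdot,t)$ into a uniform Schwartz bound. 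The Beals--Coifman theory furnishes a Plancherel-type identity that equates $\|A(\cdot,t)\|_{L^2(\R)}^2$ with an $L^2$-norm of $v(\cdot,t)-\id$ on $\Sigma$ plus a finite contribution from the discrete spectrum; since the discrete contribution is $t$-independent by Step~1, the assumed $L^2$ bound on $A$ yields a uniform $L^2$ bound on $v(\cdot,t)-\id$. Using the Cauchy-integral representation of the reduced wave function $m$ in terms of $v$ and the residues, together with the relation \qref{eq:scatter-29} and its higher-order analogues obtained by differentiating \qref{eq:scatter3} in $x$, one transfers smoothness and decay of $v$ to Schwartz-norm bounds on $A$. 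This may be supplemented by the commuting flows of Section~\ref{subsec:ZS-AKNS}, which supply additional conserved quantities that control higher $x$-derivatives.

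The final step is routine: by compactness of bounded sets in the Schwartz topology, one extracts a subsequential limit $A(\cdot,T)\in\schwartz(\R,\markov)$, whose scattering data is the corresponding limit by continuity of the direct scattering map; since $\pot_0$ is open, $A(\cdot,T)$ is generic, and Theorem~\ref{thm:well-posed1} yields the required smooth extension. The main obstacle is the Plancherel-type estimate for $\markov$-valued potentials. Because $\markov$ is not semi-simple and the diagonal entries $A_{jj}$ are determined non-locally by the off-diagonal ones (see \qref{eq:scatter-29b}), the usual Beals--Coifman identities, which are stated for potentials vanishing on the diagonal, must be adapted to this non-standard reduction of $\gln$. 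One must verify that the estimates still close in the $L^2$ norm of $A$ rather than in some weighted or purely off-diagonal norm, and that the bootstrap from $L^2$ control to full Schwartz control is consistent with the Markov constraint preserved by the flow.
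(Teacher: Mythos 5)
The paper does not prove this theorem at all: it is quoted verbatim from Beals--Coifman \cite[Thm.~B]{BC2}, with the understanding that the modifications needed because $A\in\markov$ (rather than $A$ off-diagonal) have been dealt with in Sections~\ref{subsec:bc1} and \ref{subsec:bc2}. Your continuation-argument skeleton is the right general shape for what underlies that citation, but as written it has two concrete problems.

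First, your Step~1 is wrong for the discrete data. The poles $z_j\in Z$ lie in $\C\backslash\Sigma$, so $\Re(z_j)\neq 0$ and $e^{tz_j\Mf}$ is \emph{not} unitary; the entries of $v(z_j,t)=e^{tz_j\Mf}v_0(z_j)e^{-tz_j\Mf}$ are $e^{tz_j(f(u_i)-f(u_k))}(v_0(z_j))_{ik}$ and in general grow exponentially in $t$. Only the jump matrix on $\Sigma=i\R$ is conjugated by unitaries. So the uniform bound on the full scattering data, and with it your claim that the data stays in the image of $\pot_0$, does not follow; indeed the growth of the residue data and the possible failure of the solvability constraints on the scattering data (zeros and winding numbers of the principal minors of $v$, cf.\ \cite[Thm.~D]{BC1}) are precisely the mechanisms that can terminate existence. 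Second, the entire analytic content of the theorem is concentrated in your ``Plancherel-type identity'' and the bootstrap from an $L^2$ bound on $A(\cdot,t)$ to uniform Schwartz bounds, and you supply neither. The scattering transform here is nonlinear, so there is no exact Plancherel identity; what Beals--Coifman actually prove is a set of bicontinuity estimates for the direct and inverse maps which show that $L^2$ control of the potential suffices to continue the solution, and adapting those estimates to $\markov$-valued potentials (where $A_{jj}=-\sum_{k\neq j}A_{jk}$ is slaved to the off-diagonal part) is the one piece of work this theorem requires beyond the citation. You correctly flag this as the main obstacle, but flagging it is not the same as closing it; as it stands the proposal assumes the conclusion of the hard step.
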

In general, the maximal time interval is finite. However, global existence is guaranteed if $A_0$ is triangular. If $A$ is triangular, so are $B$ and $[A,B]$, and  $[A,B]$ vanishes on the diagonal. Thus
\be
\label{eq:bc-trace}
\Tr(A^T [A,B]) =0.
\ee
In addition, for $A \in \schwartz(\R, \markov)$
\be
\int_\R \Tr \left(A^T(x) \partial_x B\right) \, dx= \sum_{i,j} F_{ij}\int_\R A_{ij} \partial_x A_{ij}\, dx =0. 
\ee
Thus, the evolution of \qref{eq:discrete-zc} is {\em dissipative\/} (see~\cite[\S 1.11]{BC2}) and we have global existence.
\begin{corollary}
\label{cor:global}
Assume the hypotheses of Theorem~\ref{thm:well-posed1} and assume in addition that $A_0(x)$ is triangular for every $x \in \R$. Then $T=\infty$.
\end{corollary}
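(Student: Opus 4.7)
My plan is to establish an a priori $L^2$ bound on the solution and then invoke the blow-up dichotomy of Theorem~\ref{thm:well-posed2} to upgrade the local well-posedness of Theorem~\ref{thm:well-posed1} to global well-posedness.

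First I would check that the lower-triangular structure is preserved in time. If $A(\cdot,t)$ is lower triangular then $B$ is also lower triangular (its off-diagonal entries are $F_{ij}A_{ij}$ and its diagonal is determined by the Markov constraint), and hence so is $[A,B]$. Therefore the right-hand side $\partial_x B + [A,B]$ of the zero-curvature equation preserves the lower-triangular form, and by the uniqueness assertion in Theorem~\ref{thm:well-posed1} the solution $A(\cdot,t)$ remains lower triangular on its whole interval of existence $[0,T)$.

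Next I would control the energy $E(t):=\tfrac{1}{2}\|A(\cdot,t)\|_{L^2(\R)}^{2}=\tfrac{1}{2}\int_{\R}\Tr(A^T A)\,dx$. Differentiating and using \qref{eq:discrete-zc} gives
$$ \frac{dE}{dt}=\int_{\R}\Tr\bigl(A^T\partial_x B\bigr)\,dx+\int_{\R}\Tr\bigl(A^T[A,B]\bigr)\,dx. $$
Both integrals on the right are exactly the quantities whose vanishing is recorded in the paragraph immediately before the statement of the corollary: the first integrand rearranges into $\sum_{i,j}F_{ij}A_{ij}\partial_x A_{ij}=\tfrac{1}{2}\sum_{i,j}F_{ij}\partial_x(A_{ij}^{2})$ and integrates to zero via Schwartz decay at $\pm\infty$; the second integrand is pointwise zero by \qref{eq:bc-trace}, which is the identity made available by the triangularity of $A$ together with the strict triangularity of $[A,B]$. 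Hence $E(t)\equiv E(0)$ and $\|A(\cdot,t)\|_{L^2}=\|A_0\|_{L^2}$ on $[0,T)$.

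Finally I would invoke Theorem~\ref{thm:well-posed2}, which states that either $T=\infty$ or $\lim_{t\to T}\|A(\cdot,t)\|_{L^2}=\infty$. The $L^2$ conservation from the previous step rules out the second alternative, so $T=\infty$, which is the assertion of the corollary. There is no real obstacle in this argument: the two algebraic identities that make the energy estimate work have already been assembled in the text preceding the corollary, and the rest of the proof is a standard continuation argument based on the blow-up criterion.
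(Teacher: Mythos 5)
Your proof is correct and follows essentially the same route as the paper: the two identities $\Tr(A^T[A,B])=0$ (from triangularity) and $\int_\R \Tr(A^T\partial_x B)\,dx=0$ (from Schwartz decay) give conservation of $\|A(\cdot,t)\|_{L^2}$, i.e.\ the evolution is dissipative in the sense of Beals--Coifman, and the blow-up dichotomy of Theorem~\ref{thm:well-posed2} then forces $T=\infty$. The only difference is cosmetic: you explicitly propagate a fixed lower-triangular structure in time, whereas the paper observes that pointwise triangularity of $A_0(x)$ (not necessarily the same orientation for all $x$) already suffices for \qref{eq:bc-trace}.
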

It is surprising that we do not need to assume that $A$ is everywhere lower triangular or everywhere upper triangular. The assumption that $A$ is triangular pointwise is enough to ensure \qref{eq:bc-trace} for every $x \in \R$, which in turn implies dissipativity. 

Finally, let us connect these results with the probabilistic context that motivated us. In order to ensure that $A$ is truly a generator, we must ensure that the off-diagonal terms are positive. Since smooth solutions exist, this is preserved at least for a short time. However, it is more subtle to ensure global existence. Here the convexity of $f$ plays an important role. 
\begin{theorem}
\label{thm:positivity}
Assume the hypotheses of Theorem~\ref{thm:well-posed1}. In addition, assume that $f$ is convex, and that $A_0(x)$ is the generator of a spectrally negative Markov process. Then $T=\infty$ and $A(x,t)$ remains the generator of a spectrally negative Markov process for every $t >0$.
\end{theorem}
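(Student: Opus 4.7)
The plan is to combine three ingredients: preservation of lower triangularity (which, through Corollary~\ref{cor:global}, immediately yields $T=\infty$), preservation of the Markov row-sum condition (which is purely algebraic since $\markov$ is a Lie subalgebra), and preservation of the non-negativity of off-diagonal entries (which is where convexity of $f$ enters, in exactly the same way as in the ODE argument of Theorem~\ref{thm:convex}(b)).

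I would first observe that if $A \in \lotri$ then so are $B$, $[A,B]$, and $\partial_x B$, so the subspace of Schwartz potentials valued in $\lotri\cap\markov$ is invariant under \qref{eq:discrete-zc}. Since $A_0(x)\in\lotri$ pointwise, the unique solution from Theorem~\ref{thm:well-posed1} remains triangular pointwise, and Corollary~\ref{cor:global} gives $T=\infty$. Because $\markov$ is itself a Lie subalgebra and $\partial_x$ preserves zero row sums, $A(\cdot,t)\in\markov$ throughout, so the only thing left to verify is that $A_{ij}(x,t)\geq 0$ for every $i>j$ and every $(x,t)$.

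For this positivity step, since $F$ is independent of $x$, the component equations of \qref{eq:discrete-zc} are scalar transport equations of characteristic speed $-F_{ij}$ with source $[A,B]_{ij}$. Expanding the bracket using lower triangularity, the $k=i$ and $k=j$ contributions factor as multiples of $A_{ij}$, and the remainder is a Manakov-type quadratic sum:
\[
[A,B]_{ij} = A_{ij}\Delta_{ij} + \sum_{j<k<i}(F_{kj}-F_{ik})A_{ik}A_{kj}, \qquad i>j,
\]
where $\Delta_{ij}=F_{ij}(A_{ii}-A_{jj})+B_{jj}-B_{ii}$ is bounded along characteristics because $A\in\schwartz(\R,\markov)$. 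Along the characteristic $x(s)=x+F_{ij}(t-s)$ the Duhamel representation reads
\[
A_{ij}(x,t) = e^{\int_0^t\Delta_{ij}(x(s),s)\,ds}A_{ij}(x(0),0) + \int_0^t e^{\int_s^t\Delta_{ij}(x(r),r)\,dr}Q_{ij}(x(s),s)\,ds,
\]
with $Q_{ij}:=\sum_{j<k<i}(F_{kj}-F_{ik})A_{ik}A_{kj}$. I would then induct on $d=i-j$: for $d=1$ the sum defining $Q_{ij}$ is empty, so the Duhamel formula is linear in the initial data and directly propagates positivity of $A_{i,i-1}$. For the inductive step, the entries occurring in $Q_{ij}$ all have distance strictly less than $d$ from the diagonal, hence are non-negative by hypothesis, and the convexity computation \qref{eq:pos2} gives $F_{kj}-F_{ik}\geq 0$ for $j<k<i$; thus $Q_{ij}\geq 0$ along the characteristic and the Duhamel formula again preserves positivity. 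Combined with lower triangularity and the zero-row-sum property, this realises $A(x,t)$ as the generator of a spectrally negative Markov process for every $t>0$.

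The hard part, I expect, is the algebraic bookkeeping in the decomposition $[A,B]_{ij}=A_{ij}\Delta_{ij}+Q_{ij}$: one must verify that the $k=i$ and $k=j$ contributions really do factor through $A_{ij}$, so that when $A_{ij}$ is small only the quadratic piece remains as an autonomous source and the convexity estimate \qref{eq:pos2} can be invoked, precisely paralleling the ODE identities \qref{eq:pos1}--\qref{eq:pos2}. Once that factorization is in hand, the induction on $d=i-j$ and the Duhamel representation are routine; no further PDE machinery (such as a full parabolic maximum principle) is required, because the system is essentially a triangular cascade of linear transport equations along deterministic characteristics.
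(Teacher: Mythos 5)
Your proposal is correct and follows essentially the same route as the paper: integration of \qref{eq:discrete-zc} along the characteristics of speed $-F_{ij}$, the expansion \qref{eq:pos1} of $[A,B]_{ij}$, and the convexity inequality \qref{eq:pos2}, together with preservation of triangularity and Corollary~\ref{cor:global} for $T=\infty$. The only real difference is that where the paper appeals to ``a simple maximum principle argument,'' you close the positivity step explicitly via the factorization $[A,B]_{ij}=A_{ij}\Delta_{ij}+Q_{ij}$, the integrating-factor (Duhamel) representation, and an induction on the distance $i-j$ from the diagonal --- a welcome sharpening, since $Q_{ij}\geq 0$ is only available once nonnegativity of the nearer-diagonal entries is already known.
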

\begin{proof}
Equation \qref{eq:discrete-zc} may also be solved by the method of characteristics. Indeed, $B_{ij}=F_{ij}A_{ij}$, thus each entry $A_{ij}$ evolves on a characteristic with speed $-F_{ij}$. The characteristic speed is simply the Rankine-Hugoniot condition associated to the shock connecting states $u_i$ and $u_j$. We integrate \qref{eq:discrete-zc} on characteristics to 
find
\be
\label{eq:chars}
A_{ij}(x,t) = (A_0)_{ij}(x+F_{ij}t) + \int_0^t [A,B]_{ij}(x+F_{ij}s,s)\, ds.
\ee
Since $u_1 < \ldots < u_M$ and $f$ is convex, we now find exactly as in the proof of Theorem~\ref{thm:convex} that $[A,B]_{ij}  \geq 0$, $i >j$. The diagonal terms are conserved on characteristics since $[A,B]_{ii}=0$ since $B= F \circ A$. Similarly, the upper-triangular part of $[A,B]$ vanishes if $A$ and $B$ are lower-triangular. A  simple maximum principle argument shows that $A$ remains lower-triangular.
\end{proof}
The integral equations \qref{eq:chars} can also be used to give a direct proof of global existence of solutions without the assumption that $A_0$ is generic. Since $A_{ij}\geq 0$ on the off-diagonal, and $A_{ii}$ is conserved, we have the bound $\|A_{ij}(\cdot, t)\|_{L^\infty(\R)} \leq \|(A_0)_{ii}\|_{L^\infty(\R)}$, which ensures global existence.

\subsection{The ZS-AKNS hierarchy}
\label{subsec:ZS-AKNS}
The discrete zero-curvature equations \qref{eq:discrete-zc} are part of a hierarchy of commuting Hamiltonian flows. The existence of such hierarchies was established in the pioneering work of Zakharov and Shabat~\cite{ZS1} and Ablowitz, Kaup, Newell and Segur~\cite{AKNS}. We now derive the associated hierarchy for \qref{eq:discrete-zc}: as expected this is a modification of the hierarchy for the $N$-wave model. Our calculations and notation follow~\cite{TU1}.

We fix a diagonal matrix $\Mf$ and consider the asymptotic behavior of $Q(x,z)= m^{-1}\Mf m$ as $z \to \infty$. By Theorem~\ref{thm:inv-spec}(c), we have $m^{-1}\Mf m = h^{-1}\Mf h$, and the expansion \qref{eq:bc-asympt2} yields
\be
\label{eq:hier1}
Q(x,z)= h^{-1}\Mf h \sim \sum_{k=0}^\infty Q\ksup z^{-k}, \quad z \to \infty.
\ee
We call $Q\ksup$ the $k$-th flux. It admits an expansion
\be
\label{eq:hier4}
Q\ksup= z^k \Mf + z^{k-1} B_1  + z^{k-2}B_2 +\ldots + B_k, \quad B_k \in \markov.
\ee
\begin{defn}
The $k$-th flow in the hierarchy is given by the equation
\be
\label{eq:hier-k}
\partial_t A - \partial_x Q\ksup = [A, Q\ksup], \quad k \geq 0.
\ee
\end{defn}
The zero-curvature equation \qref{eq:hier-k} may also be written in the Lax form
\be
\label{eq:hier-k-lax}
\left[ \partial_x + z\My + A, \partial_t + Q\ksup\right] =0.
\ee
The $k$-th flux is obtained as follows. We show below that $Q$ satisfies the linear equation
\be
\label{eq:hier3}
Q_x = [Q, z\My + A].
\ee
The asymptotic expansion \qref{eq:hier1} now yields the hierarchy of linear equations
\be
\label{eq:scatter-42o}
0 = [Q^{(0)}, \My],
\ee
and 
\be
\label{eq:scatter-42k}
Q\ksup_x - [Q\ksup,A] = [Q^{(k+1)}, \My], \quad k \geq 0.
\ee
Since $h(x,z) \sim \id + z^{-1}h^{(1)}$ as $z \to \infty$, we have
\be
\label{eq:hier5}
Q(x,z) \sim \Mf + \frac{[\Mf,h^{(1)}]}{z} + \ldots, \quad z \to \infty.
\ee
Thus, $Q^{(0)}=\Mf$ is the solution to \qref{eq:scatter-42o}. For $k\geq 1$ the ansatz \qref{eq:hier4} yields $k+2$ linear equations for $B_j$. We use \qref{eq:hier4} and \qref{eq:hier-k-lax} to obtain the equations
\ba
\label{eq:q1}
O(z^{k+1}): && [\My, \Mf]=0, \\
O(z^{k}):   && \partial_x \Mf + [A,\Mf] + [\My,B_1] =0, \\
O(z^j), k-1 \geq j \geq 1: && \partial_x B_{k-j} + [A, B_{k-j}] +[\My,B_{k+1-j}]=0, \\
O(1): && \partial_x B_k - \partial_t A + [A,B_k] =0.
\ea
The $O(z^{k+1})$ equation is trivially satisfied since $\Mf$ is diagonal. Since $\Mf$ is independent of $x$, the $O(z^k)$ equation generalizes equation  \qref{eq:manakov}. 
When $k=1$ this yields the off-diagonal terms of $B_1=B$ in accordance with \qref{eq:Markov3}. For $k >1$ we recursively solve \qref{eq:q1} until we obtain $B_k$. This is very similar to the recursion for the classical $N$-wave model with one important difference. When solving \qref{eq:q1} recursively, we realize that the $O(z)$ term yields only the off-diagonal terms of $B_k$. In the earlier work of Zakharov and Manakov, the diagonal terms of $B_k$ vanished because of the assumption that $B_k \in \unn$. Here these terms suffice to determine $B_k$ since $B_k \in \markov$.

The $k$-th flow may be solved by the inverse scattering method. The scattering data evolves by 
\be
\label{eq:spec-evolve3}
v(z,t) = e^{t z^k \Mf}v_0(z) e^{-tz^k\Mf}, \quad v(z_j,t) = e^{t z^k_j \Mf}v_0(z_j) e^{-tz^k_j\Mf}.
\ee
Well-posedness of the $k$-th flow requires that the evolution is formally stable~\cite[3.12]{BC2}. In our case, this requires $\Re(z^k\Mf_{jj})=0$ for $z \in \Sigma$ and each diagonal entry of $\Mf$. Since $\Sigma$ is the imaginary axis, this is satisfied for all odd $k$ when $\Mf$ is real (in particular, for $\Mf$ given by $\diag(f(u_1), \ldots, f(u_M))$. For even $k$, we need to choose $\Mf$ purely imaginary, and we find that $B_k$ is purely imaginary if $A$ is real. This is incompatible with $A \in \markov$. For odd $k$, Theorem~\ref{thm:well-posed1} and Theorem~\ref{thm:well-posed2} hold with \qref{eq:spec-evolve2} replaced by \qref{eq:spec-evolve3} and \qref{eq:discrete-zc} replaced by \qref{eq:hier-k}. It is not clear if these equations have a true probabilistic interpretation.

\subsection{Proofs of Theorem~\ref{thm:bc1} and Theorem~\ref{thm:bc2}}
\label{subsec:bc1}
We now present the calculations and matrix factorization theorem that underlie Theorem~\ref{thm:bc1} and Theorem~\ref{thm:bc2}. To this end, it is enough to assume that $A$ is $C^\infty$ with compact support in $x$. A density argument as in~\cite{BC1} yields the conclusions for $A\in L^1(\R, \markov)$. To construct a globally bounded solution $m(x,z)$ we assume that $A$ has compact support and solve the initial value problem
\be
\label{eq:scatter4}
\tilde{m}_x = z[\tilde{m},\My] + \tilde{m}A, \quad \tilde{m}(x,z) = \id, x \ll 0.
\ee
It is clear that $\qref{eq:scatter4}$ has a unique solution that is holomorphic in  $z$. In addition, since $\tilde{m}_x= z [\tilde{m},\My]$, $x \gg 0$, there exists a holomorphic matrix $s(z)$ such that 
\be
\label{eq:scatter5}
\tilde{m}(x,z) = \left\{ \begin{array}{ll} \id, & x \ll 0, \\ e^{-zx\My} s(z)e^{zx\My}, & x \gg 0. \end{array} \right. 
\ee
Observe also that $\tilde{m}$ is always invertible because
\be
\label{eq:scatter7}
\det(\tilde{m})(x,z) = \exp\left( \int_{-\infty}^x \Tr(A(s))\, ds\right).
\ee

We seek a bounded solution to \qref{eq:scatter3} of the form $m(x,z)=\tilde{g}(x,z)\tilde{m}(x,z)$. Since $m$ solves \qref{eq:scatter3} and $\tilde{m}$ solves \qref{eq:scatter4} we find that $\tilde{g}(x,z)= e^{-xz\My} g(z) e^{xz\My}$ for some matrix $g(z)$. The asymptotic behavior of $\tilde{m}$ in \qref{eq:scatter5} then implies
\be
\label{eq:scatter12}
m(x,z) = \left\{ \begin{array}{ll} e^{-xz \My}g(z) e^{xz\My}, & x \ll 0, \\ e^{-zx\My} g(z) s(z)e^{zx\My}, & x \gg 0. \end{array} \right. 
\ee
It only remains to choose $g$ so that $m$ is globally bounded in $x$ for fixed $z$. 

First assume  $z$ is in the left-half plane. Recall that $\My= \diag(u_1, \ldots, u_m)$ with $u_1 < u_2 < \ldots < u_m$. Thus,  $m_{jk}(x,z) = g_{jk}(z) e^{xz(u_k-u_j)}$ for $x \ll 0$. Since $\Re(z) >0$, $m_{jk}$ is bounded only if $g_{jk}=0$ for $j < k$. Thus, $g$ is {\em lower-triangular\/}.  We next find that $m$ is bounded  as $x \to \infty$ only if $g(z) s(z)$ is {\em upper-triangular\/}. Finally, since $m \to \id$ as $x \to -\infty$, we see that the diagonal entries of $g$ are all $1$. In order to choose $g$ in accordance with these constraints, recall that Gaussian elimination may be written as the matrix factorization
\be
\label{eq:scatter16}
s(z) =L(z) D(z) U(z)
\ee
where $L$ and $U$ are lower and upper triangular matrices that are $1$ on the diagonal and $D(z)=\diag(\det(s_1(z), \ldots, s_m(z))$ where $s_k(z)$ denotes the $k \times k$ upper-block of $s(z)$~\cite[Thm 1.1]{Faddeev}. By construction, $L$, $D$ and $U$ are
unique except at the zeros of $\det(s_k(z))$, $k=1, \ldots, m$. Since $s$ is entire, this set is discrete. We then choose $g^-{-}(z) = L(z)^{-1}$, the superscript denoting the left-half plane. By construction, $g$ is meromorphic in the left-half plane.

A similar calculation in the right-half plane $\Re(z)>0$ reveals that $g(z)$ must be upper-triangular and $g(z)s(z)$ must be lower-triangular. In this case, we factorize
\be
\label{eq:scatter18}
s(z) = \tilde{L}(z) \tilde{D}(z) \tilde{U}(z)
\ee
where now $\tilde{D}= \diag(\det(\tilde{s}_1(z), \ldots, \tilde{s}_m(z))$ and $\tilde{s}_k(z)$ denotes the $k \times k$ lower-block of $s(z)$. We now find $g^+(z)=\tilde{U(z)}^{-1}$. 

Let $Z$ denotes the set of zeros of $\det(s_k(z))$ and $\det(\tilde{s_k}(z))$, $k=1, \ldots,m$. The factorizations $g^\pm(z)$ are continuous on $z \in \Sigma \backslash{Z}$. To summarize, we have 
\be
\label{eq:scatter16a}
m^\pm(x,z) = e^{-xz\My} g^\pm(z) e^{xz\My} \tilde{m}(x,z), \quad \in \R, z \in \C\backslash{Z}
\ee
where $\tilde{m}(x,z)$ is entire in $z$, $g^\pm$ are obtained by factorizing $s$ as in \qref{eq:scatter16} and \qref{eq:scatter18}. In order to obtain the scattering data, we isolate the jump in $m$ on $\Sigma$.  If we set $\varphi (x,z)= m^+ (m^-)^{-1}$, $x \in \R$, $z \in \Sigma$, we find $\varphi_x =z[\varphi, \My]$. Thus, there exists a matrix $v(z)$ such that $\varphi(x,z) = e^{-xz\My} v(z) e^{xz\My}$ and we have
\be
\label{eq:scatter-23}
m^+(x,z) =e^{-xz\My} v(z) e^{xz \My} m^-(x,z), \quad x \in \R, z \in \Sigma \backslash{Z}.
\ee
These are the main calculations needed to establish the existence of the jump measure, and it is clear that the assumption $A \in \markov$ (as opposed to $A_{jj}=0$) has played only a minor role (e.g. \qref{eq:scatter7} has replaced $\det(\tilde{m})\equiv 1$.). The arguments in ~\cite[pp. 48-49]{BC1} are similarly modified to yield Theorem~\ref{thm:bc1} and Theorem~\ref{thm:bc2}.

\subsection{Proof of Theorem~\ref{thm:inv-spec}}
\label{subsec:bc2}
Theorem~\ref{thm:inv-spec} is a modification of~\cite[Thm 6.1]{BC1}. The main differences are that $m_0$ is no longer the identity, and we have to solve separately for diagonal and off-diagonal terms. It is simplest to postulate an expansion of the form \qref{eq:bc-asympt} and solve for the terms $m_j$. One may then justify the expansion for $A$ that is suitably regular as in~\cite[\S 6]{BC1}. 

Assume \qref{eq:bc-asympt} holds and $m(x,z) \to \id$ as $x \to -\infty$. We substitute this ansatz in \qref{eq:scatter3} to find the hierarchy of equations
\be
\label{eq:scatter-26o}
0= [m^{(0)}(x), \My],
\ee
and 
\be
\label{eq:scatter-26k}
\frac{dm\ksup}{dx} - m_k A= [m^{(k+1)}(x), \My], \quad k=0,1,\ldots
\ee
Equation \qref{eq:scatter-26k} implies that on the diagonal
\be
\label{eq:scatter-26kd}
\frac{dm\ksup_{ii}}{dx} - \left(m\ksup A\right)_{ii}=0.
\ee
On the off-diagonal
\be
\label{eq:scatter-26ko}
\frac{dm\ksup_{ij}}{dx} - \left(m\ksup A\right)_{ij}= \left(u_j-u_i\right) m^{(k+1)}_{ij}, \quad i \neq j.
\ee
Equation \qref{eq:scatter-26o} implies that $m^{(0)}$ is a diagonal matrix. We then solve \qref{eq:scatter-26kd} with $k=0$ to obtain \qref{eq:scatter-28}. It is simplest to solve the rest of the hierarchy by making the ansatz
\be
\label{eq:h-ansatz}
m(x,z) = m^{(0)}\sum_{k=0}^\infty z^{-k} h\ksup(x,z) = m^{(0)}h(x,z).
\ee
Let $A_d$ and $A_o$ denote the diagonal and off-diagonal terms of $A$. We substitute \qref{eq:h-ansatz} in \qref{eq:scatter3} to find
\be
\label{eq:scatter-32}
h_x = [h, z\My + A_d] + hA_o. 
\ee
Then we have the hierarchy of linear equations
\be
\label{eq:scatter-33}
0 = [h^{(0)}, \My],
\ee
and 
\be
\label{eq:scatter-34}
h\ksup_x - [h\ksup,A_d] - h\ksup A_o = [h^{(k+1)}, \My], \quad k \geq 0.
\ee
Equation \qref{eq:scatter-33} implies that $h^{(0)}$ is diagonal. We then consider the diagonal terms of \qref{eq:scatter-34} with $k=0$ to  find $h^{(0)}_x=0$. Since $\lim_{x \to -\infty}h(x,z)=\id$, this implies $h^{(0)}\equiv \id$ as expected. The off-diagonal terms of \qref{eq:scatter-34} with $k=0$ may be solved algebraically and yield
\be
\label{eq:scatter-36}
h^{(1)}_{ij}= \frac{A_{ij}}{u_i-u_j}, \quad i \neq j.
\ee

This process can be continued indefinitely. At each step, we first solve a differential equation that yields the diagonal terms of $h\ksup$, and then an algebraic equation that yields the off-diagonal terms of $h^{(k+1)}$. For example, we find 
\be
\label{eq:scatter-37}
h^{(1)}_{ii}(x) = \sum_{j \neq i}\frac{1}{a_i-a_j} \int_{-\infty}^x A_{ij}(s)A_{ji}(s)\, ds,
\ee
and for the off-diagonal terms of $h^{(2)}$
\be
\nn
h^{(2)}_{ij}=\frac{1}{u_i-u_j}\left( \sum_{k \neq i,j} \frac{A_{ik}A_{kj}}{u_i-u_k} + h^{(1)}_{ii} A_{ij} + \frac{A_{ij}(A_{ii}-A_{jj})}{u_i-u_j} - \frac{\partial_x A_{ij}}{u_i-u_j}) \right).
\ee
This process becomes increasingly unwieldy, but at every step $h\ksup$ is expressed as a finite number of integro-differential terms of $A$. If $A$ is in the Schwartz class so is $h\ksup$ for $k \geq 1$. This is enough to establish the uniform convergence of Theorem~\ref{thm:inv-spec}.

\section{Acknowledgements}
I thank Mark Ablowitz, Mark Adler, Percy Deift, Luen-Chau Li, David Mumford, Bob Pego, and Fraydoun Rezakhanlou for stimulating discussions. This work is part of a general program developed with Ravi Srinivasan and I am particularly indebted to him.
Thanks also to Michele Benzi for informing me of~\cite{Johnson}. This work was supported by NSF grant DMS 07-48482.

\bibliographystyle{siam}
\bibliography{m1}

\end{document}